\documentclass[final,3p]{elsarticle}

\usepackage{lineno}
\usepackage[colorlinks=true,breaklinks=true,pdftex]{hyperref}
\modulolinenumbers[1]

\journal{GMP 2019}

\makeatletter
\def\ps@pprintTitle{%
 \let\@oddhead\@empty
 \let\@evenhead\@empty
 \def\@oddfoot{}%
 \let\@evenfoot\@oddfoot}
\makeatother

\bibliographystyle{model2-names}\biboptions{authoryear}

\usepackage{subcaption}
\usepackage{relsize}
\usepackage{bm}
\usepackage{amssymb}
\usepackage{amsmath}
\usepackage{amsthm}
\usepackage{algorithm}
\usepackage{algorithmic}
\usepackage{graphicx}
\usepackage[usenames, dvipsnames]{color}
\usepackage{longtable}
\usepackage{multirow}
\usepackage{booktabs}
\usepackage{tabularx}
\usepackage{wrapfig}
\usepackage{eqparbox}

\graphicspath{{figure/}}

\newcommand{\Bezier}{B\'{e}zier}
\newcommand{\NURB}{NURBS}
\newcommand{\Bspline}{B-spline}
\newcommand{\Reals}{\rm I\!R}
\newcommand\NewT[1]{\textcolor{black}{#1}}

\let\oldproofname=\proofname
\renewcommand{\proofname}{\rm\bf{\oldproofname}}

\newtheorem{theorem}{\sffamily Theorem}
\newtheorem{lemma}[theorem]{\sffamily Lemma}

\newtheorem{definition}[theorem]{\sffamily Definition}
\newtheorem{remark}[theorem]{\sffamily Remark}

%
%

\setcounter{topnumber}{4}
\setcounter{bottomnumber}{2}
\setcounter{totalnumber}{4}


\begin{document}
	
\begin{frontmatter}
%
%
\title{Volumetric Untrimming:\\ Precise decomposition of trimmed trivariates into tensor products}


\author[technion]{Fady Massarwi\corref{cor}}
\ead{fadym@cs.technion.ac.il}
\author[epfl]{Pablo Antolin}
\ead{pablo.antolin@epfl.ch}
\author[technion]{Gershon Elber}
\ead{gershon@cs.technion.ac.il}
\cortext[cor]{Corresponding author}
\address[technion]{Faculty of Computer Science, Technion-Israel Institute of Technology, Israel}
\address[epfl]{Institute of Mathematics, \'Ecole Polytechnique F\'ed\'erale de Lausanne, Switzerland}

%
%


\begin{abstract}
3D objects, modeled using Computer Aided Geometric Design (CAGD)
tools, are traditionally represented using a boundary representation
(B-rep), and typically use spline functions to parameterize these
boundary surfaces. However, recent development in physical analysis,
in isogeometric analysis (IGA) in specific, necessitates a volumetric
parametrization of the interior of the object. IGA is performed
directly by integrating over the spline spaces of the volumetric
spline representation of the object. Typically, tensor-product
\Bspline{} trivariates are used to parameterize the volumetric domain.

A general 3D object, that can be modeled in contemporary B-rep CAD
tools, is typically represented using trimmed \Bspline{} surfaces. In
order to capture the generality of the contemporary B-rep modeling
space, while supporting IGA needs, \cite{Massarwi2016} proposed the
use of trimmed trivariates volumetric elements. However, the use of
trimmed geometry makes the integration process more difficult since
integration over trimmed B-spline basis functions is a highly
challenging task~\cite{xu2017improved}.  In this work, we propose an
algorithm that precisely decomposes a trimmed \Bspline{} trivariate
into a set of (singular only on the boundary) tensor-product
\Bspline{} trivariates, that can be utilized to simplify the
integration process, in IGA. The trimmed \Bspline{} trivariate is
first subdivided into a set of trimmed \Bezier{} trivariates, at all
its internal knots. Then, each trimmed \Bezier{} trivariate, is
decomposed into a set of mutually exclusive tensor-product \Bspline{}
trivariates, that precisely cover the entire trimmed domain. This
process, denoted {\em untrimming}, can be performed in either the
Euclidean space or the parametric space of the trivariate. We present
examples of the algorithm on complex trimmed trivariates' based
geometry, and we demonstrate the effectiveness of the method by
applying IGA over the (untrimmed) results.
 
\end{abstract}

\begin{keyword}
	Volumetric representations \sep V-rep \sep V-model \sep isogeometric analysis \sep IGA \sep heterogeneous materials.
\end{keyword}

\end{frontmatter}



\section{Introduction} \label{sec:introSec}
Tensor product (\Bezier{}, \Bspline{} and \NURB{}) surfaces are widely used in CAGD due to their simple structure, mathematical form, and powerful geometrical properties, that make them intuitive to use. Consider the parametric representation of a tensor product \Bspline{} volume:
\begin{definition}
	A {\em \Bspline{} parametric trivariate} is a volumetric extension to
	parametric \Bspline{} curves and surfaces, with a three dimensional
	parametric space~\cite{cohengeometric}. A common representation
	of trivariates is by tensor product \Bspline{}s, as:
	\begin{equation}
	T(u,v,w)=\sum_{i=0}^{l}\sum_{j=0}^{m}\sum_{k=0}^{n}
	P_{i,j,k} B_{i,d_{u}}(u) B_{j,d_{v}}(v) B_{k,d_{w}}(w),
	\label{eqn:triv}
	\end{equation}
	where $T$ is defined over the 3D parametric domain $[U_{min},
	U_{max}) \times [V_{min}, V_{max}) \times [W_{min}, W_{max})$, 
	$P_{i,j,k} \in \Reals^q$, $q \ge 3$ are the control
	points of $T$, and $B_{i,d}$ is the $i$'th univariate \Bspline{}
	basis functions of degree $d$.
\end{definition}

3D geometric objects, generated with contemporary computer aided geometric design (CAGD) systems, are almost solely exploiting a boundary representation (B-rep), where these objects' boundaries are represented as trimmed parametric surfaces. In recent years, with recent developments of additive manufacturing and 3D printing as well as analysis, the demand for a full volumetric representation (V-rep) is increasing. Volumetric modeling of the inside of the object, as oppose 
to only its boundary (B-rep), can be used to describe different volumetric properties such as materials or stresses, in scalar, vector or tensor fields. Developments in physical analysis, isogeometric analysis (IGA) in specific~\cite{iga}, employs a parametrization of the object's volume. In IGA, the analysis is performed by integrating in the same spline spaces that describe the geometry. %

Tensor product trivariates are limited to a cuboid topology, making them difficult to use in creating general 3D objects, having an arbitrary topology. That is, the cuboid topology does not allow one to represent with ease general shapes, including holes.
Similar to trimmed surfaces, that enriches the B-rep modeling space of the set of objects that can be represented using tensor product surfaces~\cite{cohengeometric}, trimmed trivariates can be used to create far more general volumetric objects, compared to tensor products. A framework for a volumetric representation (V-rep) modeling is proposed in~\cite{Massarwi2016}, that suggests a representation for a general volumetric model (V-model) as well as V-model constructors and Boolean operations algorithms on V-models. In~\cite{Massarwi2016}, a V-model is composed of volumetric elements called V-cells, where each V-cell relates to one or more
(intersecting) trivariate(s), and is defined as following:

\begin{definition}
A V-rep cell (V-cell) is a 3-manifold that is defined over of one or more (intersecting) \Bspline{} tensor product trivariates. The sub-domain of the V-cell is delineated by trimming surfaces.
\end{definition}

In this work, we follow the definition of a V-cell in~\cite{Massarwi2016}, selecting one tensor product trivariate from the V-cell to be associate with the V-cell and parameterize it. 
Then, we define a {\bf trimmed trivariate} as following:
\begin{definition} \label{def:trimmed_tv}
Trimmed trivariate ${\cal T}$ is defined in the domain of a containing tensor product trivariate $T$ and \NewT{has} a set of (possibly trimmed) surfaces ${\cal S}$ as the trimming surfaces of $T$, forming a 2-manifold in the domain of $T$.
\end{definition}

The trimming surfaces in ${\cal S}$ can be classified into two groups. The first type of the trimming (trimmed) surfaces are (trimmed) boundary surfaces of $T$. The second type are trimming (trimmed) surfaces that are (trimmed) boundary surfaces of some other trivariates, ${\cal F}_i$, that are results of volumetric Boolean operations~\cite{Massarwi2016} between ${\cal T}$ and ${\cal F}_i$. Note that, in~\cite{Massarwi2016}, trimming (trimmed) surfaces of the second type are not defined explicitly in the parametric space of $T$, but only in the Euclidean space. Figure~\ref{fig:bez_subd_trimmed_trivar}(a) shows an example of a trimmed trivariate.

In IGA, the analysis is performed directly in a spline space employing 
\NewT{(suitable)} integration over the (trimmed) volumetric domain of the object, which herein means integration over trimmed trivariates. However,
integration over trimmed \Bspline{} basis functions is a highly challenging task~\cite{xu2017improved}. Methods to precisely integrate over the trimming domains are required, in order
to support a complete and accurate IGA over
trimmed-trivariates.  One way to achieve this goal, is by decomposing the trimmed-trivariates to tensor-products.

In this work, we present a {\em volumetric untrimming} process for trimmed trivariate:
\begin{definition}
The {\em Untrimming} of a trimmed trivariate ${\cal T}$, is a process in which ${\cal T}$ is decomposed into a set of ({\em potentially singular on the boundaries}) mutually exclusive tensor-product \Bspline{} trivariates that \NewT{precisely} cover the entire trimmed volume of ${\cal T}$.
\end{definition}

A trivariate $T$ is considered {\em potentially singular on the boundary} if the Jacobian of $T$ can vanish only at the boundaries, and $T$ is self intersection free.

The resulting set of tensor product trivariates can be utilized to simplify the integration process in IGA, as the integration will be performed over a non-trimmed \Bspline{} basis functions with full support. In the untrimming algorithm to be presented in this work, the trimmed-trivariate is first subdivided into trimmed \Bezier{} trivariates, at all its internal knots. Then, each trimmed \Bezier{} trivariate is decomposed into a set of tensor-product \Bspline{} trivariates. The untrimming algorithm can be performed either in the Euclidean space or the parametric space of the trimmed trivariate.

\NewT{It is important to remark that, on the contrary to classical finite element and IGA methods, in which the solution quality is intrinsically linked to the mesh quality, that is not the case in the present approach. In the analysis, the tensor-product trivariates created during the untrimming process are exclusively used for computing the integrals involved in Galerkin problems. Therefore, the solution discretization is not related to the untrimming result and its quality is not conditioned by the untrimming trivariates' Jacobians. We refer the interested reader to \cite{Marussig2018}, and references therein, where the numerical computation of integrals in trimmed domains using untrimming tiles is discussed.}

The rest of this document is organized as follows. Section~\ref{sec:prevSec} discusses related work. In Section~\ref{sec:algoSec}, the volumetric untrimming algorithm is introduced. In Section~\ref{sec:resultsSec}, some results of the untrimming algorithm are introduced, and in Section~\ref{sec:analysis}, an IGA application over trimmed trivariates, utilizing the presented untrimming algorithm, is presented. Finally, Section~\ref{sec:concludeSec} concludes this effort and discusses possible future work. 

\begin{figure*}
	\begin{center}
		\begin{tabular}{cccc}
			\mbox{\hspace{-0.28in}}
			\includegraphics[scale = 0.31]{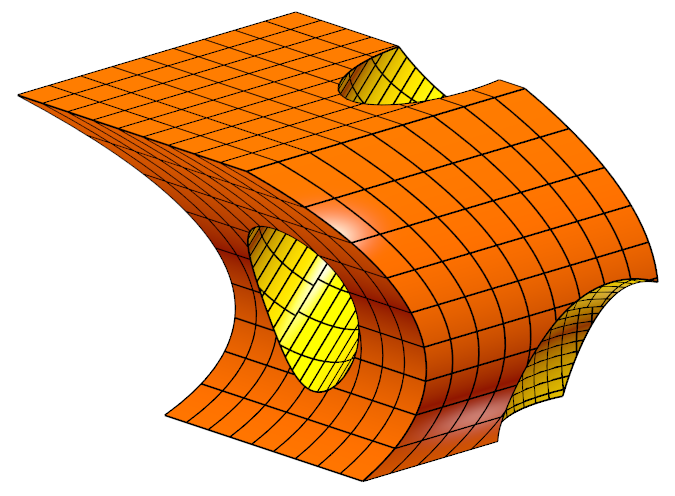} &
			\mbox{\hspace{-0.30in}}
			\includegraphics[scale = 0.31]{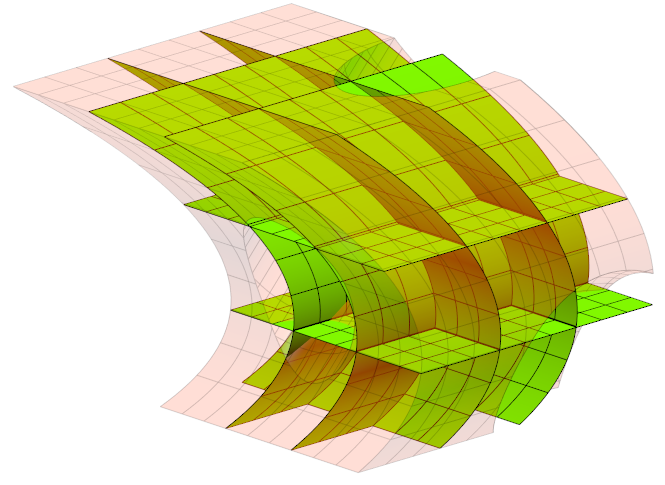} &
			\mbox{\hspace{-0.30in}}
			\includegraphics[scale = 0.31]{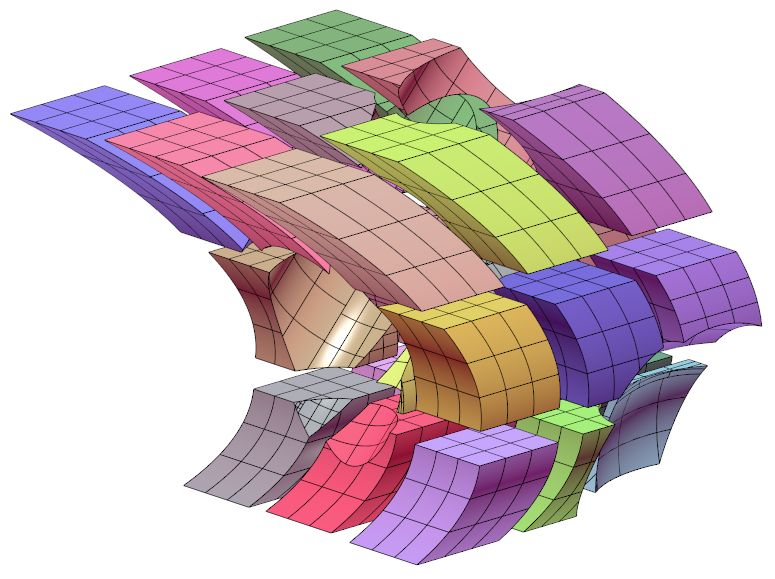}
		\end{tabular}
	\end{center}
	\begin{picture}(0,0)
	\put(120, 25){$(a)$}
	\put(275, 25){$(b)$}
	\put(440, 25){$(c)$}
	\end{picture}
	\mbox{\vspace{-0.5in}}\\[-0.5in]
	\caption{Subdivision of a trimmed \Bspline{} trivariate into trimmed \Bezier{} trivariates, shown in Euclidean space. (a) A trimmed \Bspline{} trivariate, having two internal knots at each parametric direction. Trimmed trimming surfaces on the boundaries of the trivariate have an orange color, and non-boundary trimmed trimming surfaces are colored with yellow. (b) The iso parametric surfaces of the internal knots (in green). (c) Twenty eight (and not twenty seven!) trimmed \Bezier{} trivariates in an exploded view, as result of all the subdivisions along the iso parametric knot surfaces (in (b)).}
	\label{fig:bez_subd_trimmed_trivar}
\end{figure*}

\section{Previous work}	\label{sec:prevSec}

Many algorithms have been proposed for hexahedral meshing of
2-manifold polygonal meshes,
i.e~\cite{hex_armstrong2015common,hex_yu2015recent}. Given a trimmed
trivariate, hexahedral meshing algorithms can be applied to a
polygonal mesh which is an approximation mesh of the boundary surfaces
of the trimmed trivariate, only to generate hexahedral elements that
cover the trimmed domain. Then, a trilinear tensor product trivariate
can be extracted from each hexahedron.  However, there are two
conceptual drawbacks of such approach. First, the boundaries of the
trimmed trivariate are approximated, which introduces an error which
might lead to instabilities in the integration and analysis
processes. Second, all the generated tensor product trivariates are of
linear order, where in IGA, basis functions of a higher order (and
continuity) are typically handled (and desired). An extensive review
of the existing isogeometric techniques for trimmed domains, mostly in
the context of trimmed surfaces, is presented in~\cite{Marussig2018}.

\NewT{Meshing tools and algorithms are a long and large subject of
research in the finite element community.  For example,
\cite{geuzaine2009gmsh} introduce {\em Gmsh}, a commonly used tool for
mesh generation that is suitable for finite element analysis,
including of \Bspline{} surfaces.  Starting from a B-rep model, the
boundary (trimmed) surfaces are triangulated, and then methods for
tetrahedral mesh generation are applied to generate finite elements
approximating the enclosed volume of the B-rep model. In addition to
the drawbacks discussed above when applying tetrahedral mesh
generation algorithms, the generated elements in Gmsh only approximate
the boundary surfaces, and an optimization process needs to be applied
in order to reduce the approximation error. Specifically, the tool is
not sufficient for IGA needs, as in IGA, the integration should be
performed in the parametric domain of the underlying trivariate, where
the integration boundaries shouldn't cross knot values. Hence, method
for subdividing the trimmed \Bspline{} trivariate into trimmed
\Bezier{} trivariates as well as further processing of trimmed
\Bezier{} trivariates are required, and such tools are not available
in Gmsh.}

Other studies, such as~\cite{MARTIN2010187}, suggests volumetric
spline parametrization of a (B-rep) polygonal mesh. Although
generating high order trivariates, and since the input is a discrete
triangulated mesh, it is not clear how to utilize such an approach in
a design process, while~\cite{MARTIN2010187} can serve as another way
of synthesizing trivariates.

In \cite{xu2013analysis}, a volumetric parametrization of a multi-block computational domain is provided, where the computation domain is segmented into several blocks, and each block is bounded by six tensor product 2D faces that are used to generate a trivariate. An iterative process of moving the trivariate's control points is then applied in order to achieve a better analysis suitable parametrization. The optimization process improves uniformity of the Jacobian for each trivariate as well as smoothness between trivariates. However, the method in \cite{xu2013analysis} assumes full computational domains, and doesn't handle trimmed trivariates.

\cite{BPlus_2D} propose B++ splines; trimmed B-spline basis functions that are developed to perform IGA on trimmed B-rep models. In order to handle arbitrary trimming curves, the trimmed spline basis functions 
are defined over samples of trimming curves. Such sampling introduces inaccuracies, and trimmed surfaces with sharp and complex boundaries require dense sampling in order to achieve a certain accuracy. The inaccuracy is delegated to the analysis stage as the method in~\cite{BPlus_2D} requires conversion of the B-rep trimmed spline surfaces into B++ surfaces.

In~\cite{liu2014volumetric_tspline_csg}, T-spline trivariates are extracted from boundary triangulated surfaces, using hierarchical octrees, and Boolean operation of volumetric cylinders and cubes. However, even simple models composed of  cones and tetrahedrons cannot be handled by~\cite{liu2014volumetric_tspline_csg}.

An analysis suitable trivariate parametrization for a B-rep model is suggested in \cite{ENGVALL201783}. First, the B-rep model is approximated as a polygonal mesh, then the interior of the polygonal mesh is covered with unstructured tetrahedrons, and finally each tetrahedron is replaced with a rational \Bezier{} trivariate with, possibly, faces extracted from the original trimmed surfaces. 
Unfortunately, the method loses the connection to the input B-rep surfaces, and is largely ignoring the problem of trimming, and assumes non-trimmed surfaces of the B-rep model and thus can not handle trimmed trivariates.

To summarize, previous methods of trivariate volumetric parametrization of general models lack precision or robustness, and cannot handle trimmed \Bspline{} volumes. Additionally, previous methods suggest a solution for B-rep models, and \NewT{none} of them suggest a solution for a general trimmed trivariate, that can result from volumetric Boolean operations.

\section{The Untrimming Algorithm}	\label{sec:algoSec}
In this section, we present the volumetric untrimming algorithm.
Recall Figure~\ref{fig:bez_subd_trimmed_trivar}(a) and consider a trimmed \Bspline{} trivariate ${\cal T}$, following Definition~\ref{def:trimmed_tv}, as a tensor product \Bspline{} trivariate $T$, and a set of trimming (trimmed) surfaces, ${\cal S}$. We assume the following on ${\cal T}$, as in~\cite{Massarwi2016}:
\begin{itemize}
\item ${\cal S}$ consists of (trimmed) surfaces that together form a 2-manifold B-rep model in the domain of $T$. Note that each surface $S\in{\cal S}$ is a trimming surface of $T$, but $S$ can also be a trimmed surface on its own. 
\item Each trimmed surface ${S \in \cal S}$ is orientable and the normal at each point on $S$ points toward the inside of ${\cal T}$.
\end{itemize}

The untrimming algorithm decomposes ${\cal T}$ into a set of mutually exclusive tensor product trivariates that covers the volume enclosed by ${\cal T}$ and precisely interpolates the boundary (trimming surfaces). The untrimming can be performed either in the Euclidean space, where the algorithm is applied directly over the 3D geometry of ${\cal T}$, or in the parametric space of $T$, where the algorithm is applied on the trimmed parametric domain of $T$. The result of the untrimming in the parametric space, is a set of mutually exclusive tensor product trivariates $\tau$, such that $T(\tau)$ completely covers ${\cal T}$, in the Euclidean space. 

Some simple applications, such as enclosed volume computation, can be performed by summing the enclosed volume of each of the tensor product trivariates of the untrimming result in the Euclidean space. However, in general integration applications, the untrimming needs to be performed in the trimmed parametric domain.
In the next sub-sections, we describe the untrimming algorithm in the parametric space (i.e the final output is a set of tensor product \Bspline{} trivariates in the parametric space of $T$, that covers the trimmed domain). The untrimming in the Euclidean space can be considered as a special case of the algorithm, having the bounding box of ${\cal T}$ as the associated trivariate $T$.

The algorithm consists of two main stages: In the first stage, discussed in Section~\ref{sec:bezierSubdiv}, ${\cal T}$ is subdivided into trimmed \Bezier{} trivariates at all the knot planes of $T$. In the second stage, discussed in Section~\ref{sec:bezierUntrim}, each trimmed \Bezier{} trivariate from the first stage is decomposed into a set of mutually exclusive (possibly singular on the boundary) tensor product \Bspline{} trivariates that cover the domain.

\subsection{Subdivision into \Bezier{} trimmed trivariates} \label{sec:bezierSubdiv}
The integration of \Bspline{} basis functions is needed in IGA, for example, in the computation of the mass and stiffness matrix elements~\cite{bartovn2017efficient}. In order to use the Gaussian quadrature method~\cite{atkinson2008introduction} for precise computation of the integral over the \Bspline{} basis functions, the \Bspline{} basis functions need to be split into polynomial or rational (in the \NURB{} case) functions. \Bezier{} elements extraction of a tensor product \Bspline{} functions can be performed by multiple knot insertion~\cite{cohengeometric}. However, for the \Bezier{} elements' extraction from a trimmed \Bspline{} trivariate ${\cal T}$, the trimming (trimmed) surfaces, ${\cal S}$, need to be subdivided  as well, along the iso-parametric surfaces of $T$, at all the internal knots of $T$.

\begin{algorithm}
	\textbf{Input}:\\
	\begin{tabularx}{\textwidth}{ll}
	${\cal T}=\{T,{\cal S}\}$: &A trimmed \Bspline{} trivariate; $T$ is a tensor product \Bspline{} trivariate, that is trimmed by \\ & set of trimmed surfaces ${\cal S}$;\\
	$t, dir$: & Subdivision parameter, $t$, of $T$, in parametric direction $dir$: $u,v$ or $w$; 
	\end{tabularx}
	\textbf{Output}:\\
	\begin{tabularx}{\textwidth}{ll}
	${\cal T}_t^{dir}$:& A set of trimmed trivariates which are the result of subdividing ${\cal T}$ at parameter $t$ in direction $dir$;
	\end{tabularx}
	\mbox{\vspace{-0.15in}}\\[-0.15in]
	// The following auxiliary function constructs a set of trimmed trivariates  
	   by trimming the trivariate $T$,\\
	// with a set of trimming (trimmed) surfaces, ${\cal S}$. \\  
	Function \textbf{GroupNewTrimTV(${\cal S}, T$)}
	\begin{algorithmic}[1]
		\STATE ${\cal T}_{Res} := \emptyset$;
		\IF {${\cal S} \neq \emptyset$}
		\STATE ${\cal {CS}}$ := Group ${\cal S}$ into sets of 2-manifold connected closed \NewT{component} surfaces;
		\FORALL{${\cal S}_i \in {\cal {CS}}$}
		\STATE ${\cal T}_i$ := Trimmed Trivariate \{$T$,${\cal S}_i$\};
		\STATE ${\cal T}_{Res} := {\cal T}_{Res} \bigcup~\{{\cal T}_i\}$;
		\ENDFOR 
		\ENDIF
		\RETURN ${\cal T}_{Res}$;
	\end{algorithmic}
	\mbox{\vspace{-0.35in}}\\[-0.35in]
	\textbf{\\ Algorithm}:
	\begin{algorithmic}[1]
		\STATE $S_{iso}$ := Iso surface of $T$ at parameter $t$, in direction $dir$;
		\STATE $\{T_L, T_R\}$ := Subdivided $T$ at parameter $t$, in direction $dir$;
		\STATE ${\cal S}_L$ := Intersect(${\cal S}$,$S_{iso}$); //B-rep Boolean ${\cal S}*S_{iso}$ \label{SubivSL}
		\STATE ${\cal S}_R$ := Subtract(${\cal S}$,$S_{iso}$);  //B-rep Boolean ${\cal S}-S_{iso}$\label{SubivSR}
		
		\STATE ${\cal T}_t^{dir} := \begin{aligned}
			&\textbf{GroupNewTrimTV}({\cal S}_L, T_L)~ \cup \\&\textbf{GroupNewTrimTV}({\cal S}_R, T_R);
		\end{aligned}$
		
		\RETURN ${\cal T}_t^{dir}$;
	\end{algorithmic}
	\caption{\bf TrimTrivarSubdiv(${\cal T}, t, dir$): Subdivision of a trimmed \Bspline{} trivariate ${\cal T}$ at $t$ along $dir$ }
	\label{alg:TrimTVSubdAlgo}
\end{algorithm}

\begin{algorithm}
	\textbf{Input}:\\
		\begin{tabularx}{\textwidth}{ll}
		${\cal T}=\{T,{\cal S}\}$: &A trimmed \Bspline{} trivariate; $T$ is a tensor product \Bspline{} trivariate, that is trimmed by \\ & set of trimmed surfaces ${\cal S}$;
	\end{tabularx}
	
	\textbf{Output}:\\
	\begin{tabularx}{\textwidth}{ll}
	${\cal B}$:& A set of trimmed \Bezier{} trivariates;
	\end{tabularx}
	
	\textbf{Algorithm}:
	\begin{algorithmic}[1]
		\IF [no internal knots]{$T$ is a \Bezier{} trivariate}
		\RETURN ${\cal T}$;
		\ELSE
		\STATE $t, dir$ := Find an internal knot, $t$, of $T$, in parametric direction $dir$ ($u,v$ or $w$); \mbox{\hspace{0.6in}}
		\STATE ${\cal T}_t^{dir}$ := \textbf{TrimTrivarSubdiv}(${\cal T},t,dir$);
		\STATE ${\cal B}$ := $\emptyset$;
		\FORALL{${\cal T}_i \in {\cal T}_t^{dir}$}
		\STATE ${\cal B}$ := ${\cal B} \bigcup$~ \textbf{TrimTVBezierSubd}(${\cal T}_i$);
		\ENDFOR
		\RETURN ${\cal B}$;
		\ENDIF
	\end{algorithmic}
	\caption{\bf TrimTVBezierSubd(${\cal T}$): Subdivision of a trimmed trivariate ${\cal T}$ into trimmed \Bezier{} trivariates }
	\label{alg:BezierSubdAlgo}
\end{algorithm}

Algorithm~\ref{alg:TrimTVSubdAlgo} presents the process of subdividing a given trimmed trivariate, at a given parametric value and direction. 
Algorithm~\ref{alg:BezierSubdAlgo} describes the subdivision algorithm of trimmed trivariate ${\cal T}$, into trimmed \Bezier{} trivariates.   Algorithm~\ref{alg:BezierSubdAlgo} recursively divides the trimmed trivariate ${\cal T}$, at all the internal knots of $T$, in all parametric directions, using Algorithm~\ref{alg:TrimTVSubdAlgo}. 

The general trimmed trivariate subdivision algorithm
(Algorithm~\ref{alg:TrimTVSubdAlgo}) is performed in the Euclidean
space, as the trimming surfaces ${\cal S}$ are given,
following~\cite{Massarwi2016}, in the Euclidean space. In
Algorithm~\ref{alg:TrimTVSubdAlgo}, $T$ is subdivided at a parametric
value $t$, in direction $dir$, resulting in two tensor product
trivariates $T_L$ and $T_R$. The trimming surfaces, ${\cal S}$, are
then separated by the iso-surface value $t$, in direction $dir$,
$S_{iso}$, into two groups. The separation is done by applying B-rep
intersection and subtraction Boolean
operations~\cite{bops_Satoh:1991:BOS, bops_thomas1986set} between
${\cal S}$ (that forms a closed B-rep), and $S_{iso}$. See
lines~\ref{SubivSL} and~\ref{SubivSR} in
Algorithm~\ref{alg:TrimTVSubdAlgo}, where ${\cal S}_L$ and ${\cal
S}_R$ are the result of the intersection and subtraction Boolean
operations between ${\cal S}$ and $S_{iso}$ respectively. \NewT{Such
Boolean set operations typically involve computing surface-surface
intersections (SSI) that may introduce approximation errors. However,
methods for computing SSI have been investigated for a long time
(i.e.~\cite{GRANDINE1997111}), including bounds on errors in the SSI.}
Each iso-parametric surface of $T$, divides $T$ into two
parts. However, since the trimming can impose arbitrary topology, each
of the two groups (${\cal S}_L$ and ${\cal S}_R$), from both sides of
$S_{iso}$, can consist of multiple connected components of trimming
surfaces, see for example Figure~\ref{fig:bez_subd_multiple_cc}. For
each group of trimming surfaces that forms a connected component,
${\cal S}_M$, a trimmed trivariate is constructed in
\textbf{GroupNewTrimTV}, having ${\cal S}_M$ as the trimming surfaces
and either $T_L$ or $T_R$ (the one that contains ${\cal S}_M$) as its
tensor product trivariate. See
Figures~\ref{fig:bez_subd_trimmed_trivar}-\ref{fig:bez_subd_sphere}
for some results of Algorithm~\ref{alg:BezierSubdAlgo}.

\begin{figure}
\centering
  \begin{minipage}{.48\textwidth}
      \centering
      \begin{tabular}{c}
          \includegraphics[scale = 0.25]{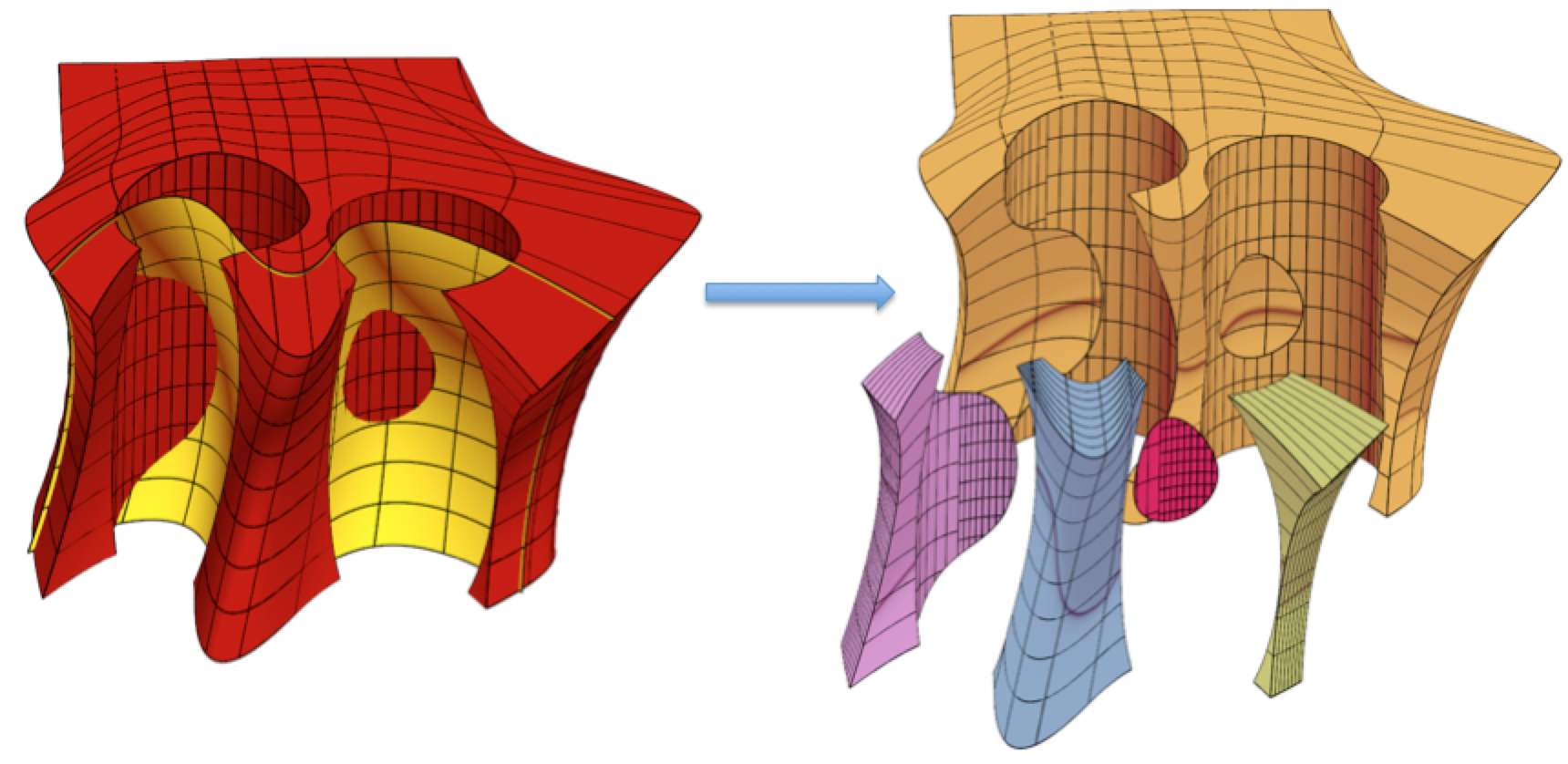}
      \end{tabular}
      \begin{picture}(0,0)
	\put(-65, 15){$(a)$}
	\put( 10, 95){$(b)$}
	\put( 50,  5){$(c)$}
      \end{picture}
      \mbox{\vspace{-0.14in}}\\[-0.14in]
      \caption{Subdivision of trimmed trivariate (red) in (a) along an iso-parametric surface (yellow) results in five trimmed trivariates: one trimmed trivariate on the back side of the iso-surface in (b), and four trimmed trivariates on the front side of the iso-surface in (c). An exploded view is presented in (b)/(c).}
     \label{fig:bez_subd_multiple_cc}
  \end{minipage}
  \mbox{\hspace{0.1in}}
  \begin{minipage}{.48\textwidth}
    \centering
	\begin{tabular}{cc}
		\mbox{\hspace{-0.21in}}
		\includegraphics[scale = 0.16]{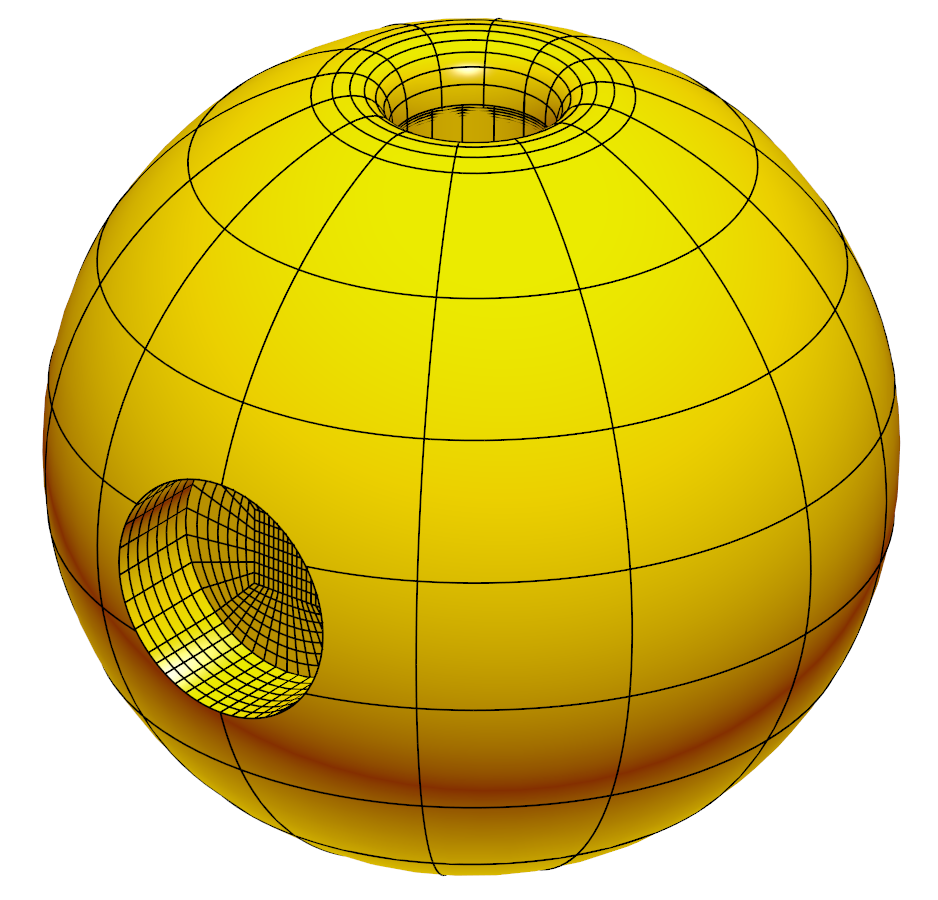} &
		\mbox{\hspace{-0.17in}}
		\includegraphics[scale = 0.14]{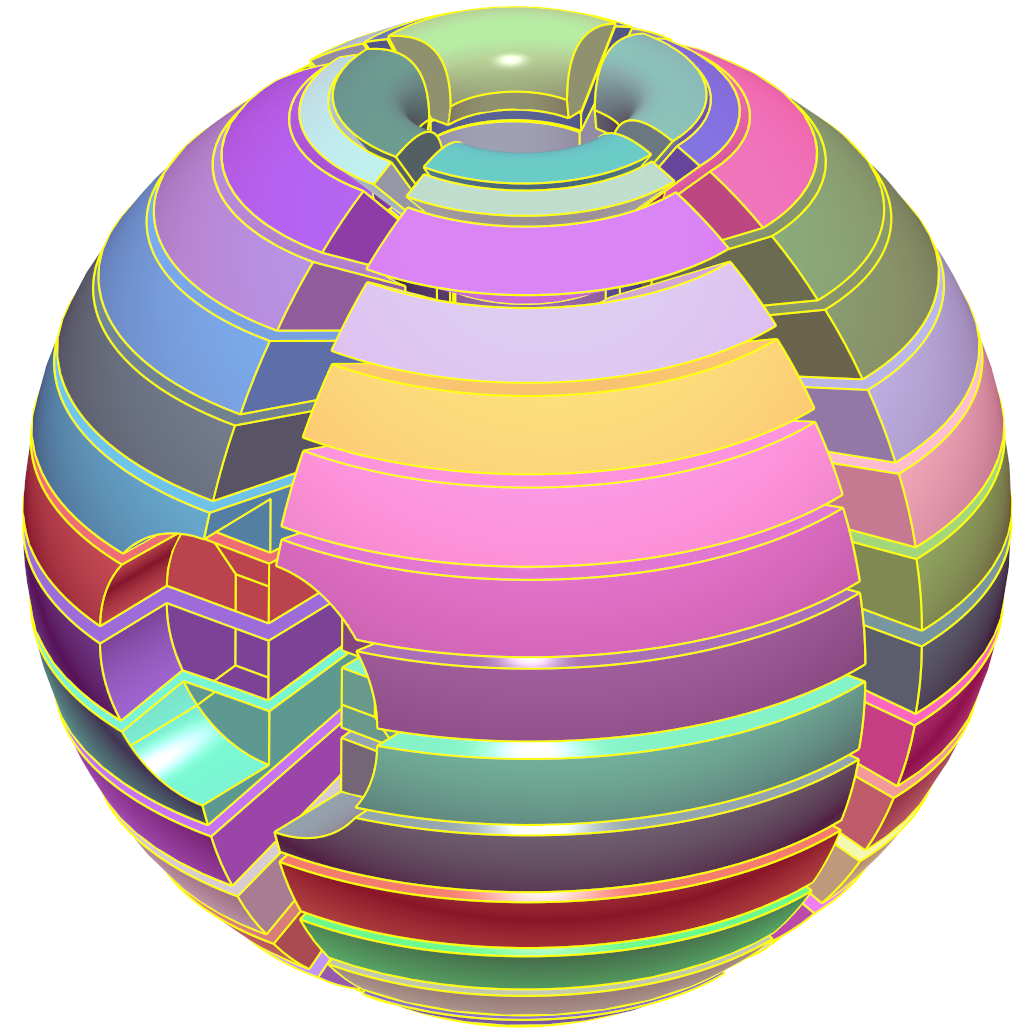}
	\end{tabular}
	\begin{picture}(0,0)
	  \put(-20, 15){$(a)$}
	  \put(100, 15){$(b)$}
	\end{picture}
	\mbox{\vspace{-0.14in}}\\[-0.14in]
	\caption{Subdivision of trimmed \Bspline{} trivariates into trimmed \Bezier{} trivariates. (a) Ten trimmed \Bspline{} trivariates forming a spherical volumetric model. (b) Subdivision of the trimmed trivariates in (a) into 72 trimmed \Bezier{} trivariates (an exploded view).}
     \label{fig:bez_subd_sphere}
  \end{minipage}
\end{figure}

\subsection{Untrimming a \Bezier{} trimmed trivariate}  \label{sec:bezierUntrim}

In this section, we describe an algorithm to decompose a trimmed \Bezier{} trivariate ${\cal T}$ into a set of possibly singular on the boundary tensor product \Bspline{} trivariates that covers ${\cal T}$. Recall ${\cal T}$ is represented as a tensor product \Bezier{} trivariate $T$, and a set of trimming (trimmed) \Bspline{} surfaces, ${\cal S}$. The trimmed surfaces in ${\cal S}$ can be classified into two groups: The first group contains boundary surfaces of $T$ that are possibly trimmed, and the second group contains surfaces that are general trimming (trimmed) surfaces in the Euclidean space that can be, for example, a result of Boolean operation with other (trimmed) trivariate. Since we aim for IGA applications that require integration over the trimmed trivariate's domain in the parametric space, the Euclidean space surfaces in ${\cal S}$ need to be back projected into the parametric domain of $T$. However, trimming surfaces in the second group can not, in general, have an algebraic form and can not have a precise piecewise-polynomial image in the parametric space of $T$. In such cases, the trimming surfaces must be approximated in the parametric space, for example, by a least squares fit.

Algorithm~\ref{alg:BezierUntrimAlgo} describes the untrimming algorithm of a trimmed \Bezier{} trivariate; decomposing a trimmed \Bezier{} trivariate into a set of mutually exclusive (singular only on the boundary) tensor product \Bspline{} trivariates $\tau$ that cover the trimmed parametric domain of ${\cal T}$, where $T(\tau)$ covers ${\cal T}$ in the Euclidean space.  
The key idea of the algorithm is finding a kernel point of ${\cal T}$: an internal point in ${\cal T}$ that is visible from all
points on all trimming surfaces in ${\cal S}$. If such a kernel point, $P$, exists, the set of trimming (trimmed) surfaces of ${\cal T}$, ${\cal S}$, is untrimmed: ${\cal S}$ is decomposed into a set of tensor product \Bspline{} surfaces, $\overline{{\cal S}}$, by surface untrimming operations, for example following~\cite{massarwi2018untrimming}. A singular (only on its boundary) tensor product \Bspline{} trivariate is then constructed between $P$ and each surface in $\overline{{\cal S}}$. 
Alternatively, If no kernel point exists, ${\cal T}$ is subdivided along an iso parametric direction, and the algorithm is recursively applied on each part. 

The algorithm suggests a solution for the 3D art-gallery~\cite{marzal2012three,berg2008computational} for objects having free form (trimmed) boundary surfaces, where each kernel point can be treated as a guard. However, clearly, the algorithm is not optimal in terms of minimal number of kernel points. Finding the optimal solution of the art gallery problem, even for planar polygons, is NP-hard~\cite{lee1986computational}

The algorithm is iterative; each iteration consists of four steps.~In the first step, discussed in Section~\ref{sec:Untrimming_surfaces}, ${\cal S}$ is untrimmed to tensor product \Bspline{} surfaces, $\overline{{\cal S}}$ (see line~\ref{bezUntrimUntrimming} in Algorithm~\ref{alg:BezierUntrimAlgo}). In the second step, discussed in Section~\ref{sec:ParamSpaceApprox}, the untrimmed tensor product surfaces from the previous step are approximated in the parametric space of $T$
(see line~\ref{bezUntrimApprox} in Algorithm~\ref{alg:BezierUntrimAlgo}). In the third step, discussed in Section~\ref{sec:Kernel_Point}, the algorithm seeks a  kernel point in the domain of ${\cal T}$, and if such a kernel point is found, builds mutually exclusive tensor product trivariates that cover ${\cal T}$ (see lines~\ref{Alg:FindKernelStart}-\ref{Alg:FindKernelEnd} in Algorithm~\ref{alg:BezierUntrimAlgo}). In the fourth step, discussed in Section~\ref{sec:No_Kernel_Subd}, if no kernel point is found, ${\cal T}$ is subdivided into several parts (using Algorithm~\ref{alg:TrimTVSubdAlgo}), and the untrimming algorithm is recursively invoked on each part (see lines~\ref{Alg:NoKernelStart}-\ref{Alg:NoKernelEnd} in Algorithm~\ref{alg:BezierUntrimAlgo}). Section~\ref{sec:No_Kernel_Subd} also discusses the termination of this algorithm.

\begin{algorithm}
	\textbf{Input}:\\
	\begin{tabularx}{\textwidth}{ll}
		${\cal T}=\{T,{\cal S}\}$: & a trimmed \Bezier{} trivariate; $T$ is the tensor product \Bezier{} trivariate, that is trimmed by\\
		& set of trimmed surfaces ${\cal S}$;
	\end{tabularx}
	\textbf{Output}:\\
	\begin{tabularx}{\textwidth}{ll}
	$\tau$: & a set of mutually exclusive (singular only on the
	          boundary) tensor product \Bspline{} trivariates that\\
	          &covers the parametric domain of ${\cal T}$;
	\end{tabularx}
	\textbf{Algorithm}:
	\begin{algorithmic}[1]
	\STATE $\tau$ := $\emptyset$;
	\STATE $\overline{{\cal S}}$ := Untrimming trimmed surfaces in ${\cal S}$ into tensor product surfaces; // i.e.~\cite{massarwi2018untrimming}\label{bezUntrimUntrimming}
	\STATE $\overline{{\cal S}}^T$ := Approximation of $T^{-1}(\overline{{\cal S}})$; \label{bezUntrimApprox}

		\STATE $p$ := Find a kernel point of $\overline{{\cal S}}^T$; \label{Alg:FindKernelStart}
		\IF {$p \neq null$}
		\FORALL{$s_i(u,v) \in \overline{{\cal S}}^T$}
			\STATE $U_i(u,v,w) := (1-w)p + ws_i(u,v),~w\in[0,1]$;
			\STATE $\tau := \tau \bigcup \{U_i(u,v,w)\}$; 
		\ENDFOR \label{Alg:FindKernelEnd}
		\ELSE 
			\STATE $t, dir$ := Find a subdivision parametric value $t$, for $T$, in direction $dir$;\label{Alg:NoKernelStart}
			\STATE ${\cal T}_t^{dir}$ := \textbf{TrimTrivarSubdiv}(${\cal T}$, $t$, $dir$); //Algorithm~\ref{alg:TrimTVSubdAlgo}
			
			\FORALL{${\cal T}_i \in {\cal T}_t^{dir}$}
			\STATE $\tau := \tau \bigcup$~\textbf{TrivBezierUntrim}(${\cal T}_i$);
			\ENDFOR \label{Alg:NoKernelEnd}
		\ENDIF
		\RETURN $\tau$;  
	\end{algorithmic}
	\caption{\bf TrivBezierUntrim(${\cal T}$): Decomposing trimmed \Bezier{} trivariate ${\cal T}$ into (singular only on the boundary) tensor product \Bspline{} trivariates.}
	\label{alg:BezierUntrimAlgo}
\end{algorithm}

\subsubsection{Untrimming of the trimming surfaces in ${\cal S}$} \label{sec:Untrimming_surfaces}
At each iteration of Algorithm~\ref{alg:BezierUntrimAlgo}, the trimming surfaces of ${\cal T}$, ${\cal S}$, which can be trimmed surfaces, are untrimmed in Euclidean space. In the process of untrimming trimmed surface $S_i \in {\cal S}$, $S_i$ is precisely decomposed  into a set of tensor product \Bspline{} surfaces that covers $S_i$. \NewT{In this work,} we use the minimal weight untrimming algorithm proposed in~\cite{massarwi2018untrimming}; \NewT{though, any other untrimming algorithm of trimmed surfaces can be employed as well. The untrimming algorithm in ~\cite{massarwi2018untrimming}} guarantees positive determinant of the Jacobian in the interior of the resulting tensor product surfaces, provided the input is regular. The untrimming of ${\cal S}$ is needed for the next steps since it simplifies the visibility test used in seeking a kernel point (see Section~\ref{sec:Kernel_Point}), as visibility test for a tensor product \Bspline{} surface is simpler than the same test for a trimmed surface - where an intersection between trimming curves and the boundaries of the visible regions of the underlying tensor product surface should be computed.

\subsubsection{Approximating ${\cal S}$ in the parametric space of $T$}
\label{sec:ParamSpaceApprox}

After untrimming ${\cal S}$ into tensor product surfaces,
$\overline{{\cal S}}$, as described in
Section~\ref{sec:Untrimming_surfaces}, each tensor product surface
$\overline{S_i}\in\overline{{\cal S}}$ is least squares approximated
with a tensor product surface $s_i$ in the parametric space of $T$,
such that $\overline{S_i}\cong T(s_i)$. 
The approximation parameters of $s_i$ are (in
each parameter direction): The number of samples $m$ of
$\overline{S_i}$, the order $o$ of $s_i$, and the number of control
points, $n$, of $s_i$. The approximation algorithm consists of three
steps:
\begin{enumerate}
	\item Sample $m^2$ points on $\overline{S_i}$, $\{P_{jk}\}$.
	\item For each sample $P_{jk} = P_{jk}(x,y,z)$, find the back projected point $p_{jk}(u,v,w)$ in the parametric space of $T$, such that $T(p_{jk})=P_{jk}$. This can be done by finding the solution of the following system of three equations and three unknowns $(u,v,w)$:
\begin{equation}
\begin{aligned}
	&T_x(u,v,w) = x,~~~~~
	&T_y(u,v,w) = y,~~~~~
	&T_z(u,v,w) = z,
\end{aligned}
\label{eqn-pt-back-proj}
\end{equation}
where $T_x, T_y, T_z$ refer to the $x,y,z$ components of $T(u,v,w)$ respectively. Note that if $T$ is regular, only a single solution exists.
	\item $s_i \leftarrow$ Least squares approximation~\cite{atkinson2008introduction} of $\{p_{jk}\}$, by a tensor product \Bspline{} surface of size $(n\times n)$ and orders $(o \times o)$. 

\end{enumerate} 

\NewT{Quite a few methods exist for computing the precise (up to
machine-precision) back projections of points in freeforms. This
classic inverse problem is, for example, efficiently solved billions
of times in~\cite{Ezair2017FabricatingFG} for a regular trivariate
$T(u, v, w)$ by posing it as (and solving) three algebraic constraints
in $(u, v, w)$ (as in Equation~(\ref{eqn-pt-back-proj})).
Going from back projection of points to back projection of curves and
surfaces is more involved but relates to fitting freeforms to point
sets, methods that are well digested and are beyond the scope of this
work.}

The approximation process described above is applied only to surfaces that are not lying on the boundaries of $T$. In the case of a trimming surface that is a (trimmed) boundary surface of $T$, $s_i$ is simply extracted as that boundary. See an example in Figure~\ref{fig:approx_param_space}. Finally, note $T(s_i)$ can be precisely computed by a symbolic surface-trivariate composition~\cite{DeRose93,Elber92}.

\begin{figure*}
	\begin{center}
		\begin{tabular}{cccc}
			\mbox{\hspace{-0.21in}}
			\includegraphics[scale = 0.34]{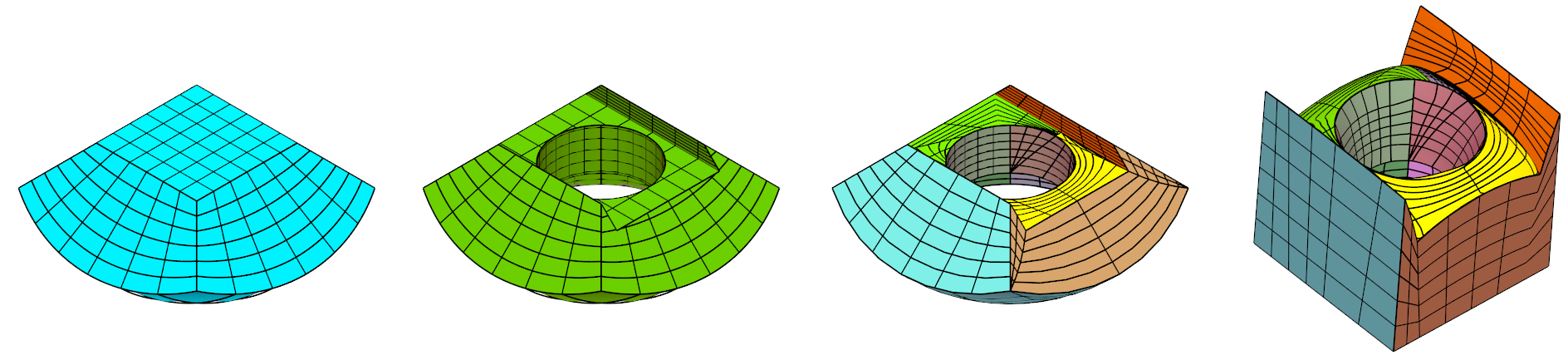} &
		\end{tabular}
	\end{center}
	\begin{picture}(0,0)
	\put( 80, 25){$(a)$}
	\put( 60, 105){$T$}
	
	\put(210, 25){$(b)$}
	\put(170, 105){${\cal S}=\{S_i\}$}
		
	\put(335, 25){$(c)$}
	\put(295, 105){$\overline{{\cal S}}=\{\overline{S_i}\}$}
		
	\put(455, 20){$(d)$}
	\put(370, 105){$\overline{{\cal S}}^T=\{s_i\}$}
	\end{picture}
	\mbox{\vspace{-0.5in}}\\[-0.5in]
	\caption{Untrimming and approximating trimming surfaces, ${\cal S}$, in the parametric space of trivariate $T$. (a) The trivariate $T$ in Euclidean space. (b) The trimming surfaces, ${\cal S}$, also in Euclidean space. (c) $\overline{\cal S}$, the untrimming of ${\cal S}$ in the Euclidean space. (d) Approximation of $\overline{\cal S}$ in the parametric space of $T$, by tensor product surfaces of orders 3x3.}
	\label{fig:approx_param_space}
\end{figure*}

\subsubsection{Seeking a kernel point} \label{sec:Kernel_Point}
A kernel point of a closed B-rep model is an internal point that is visible from every point on the boundary of the model~\cite{berg2008computational}. Consider the trimming surfaces of a trimmed trivariate, given as a set of  tensor-product \Bspline{} surfaces, $\overline{{\cal S}}^T$, (following the untrimming process in Section~\ref{sec:Untrimming_surfaces}, and the back projection to the parametric space in Section~\ref{sec:ParamSpaceApprox}). 
\begin{lemma} \label{lemma:vis}
Assume every surface $s_i\in\overline{{\cal S}}^T$ is $C^1$. A point $p$ is a kernel point of the B-rep bounded by $\overline{{\cal S}}^T$, if the following condition is satisfied:
\begin{equation} \label{eqn:visibility}
	\langle p-s_i(u,v),n_i(u,v)) \rangle > 0, \forall s_i\in \overline{{\cal S}}^T, \forall u,v \in s_i,
\end{equation} 
where $n_i(u,v)=\frac{\partial s_i}{\partial u} \times \frac{\partial s_i}{\partial v}$ is the (unnormalized) normal field of $s_i$, pointing into the B-rep. Again, note $s_i$ is regular in its interior and hence $n_i$ never vanishes in the interior of the domain.
\end{lemma}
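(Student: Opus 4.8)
The plan is to establish \emph{sufficiency} by a direct visibility argument. Let $\Omega$ denote the solid bounded by the closed, inward-oriented B-rep $\overline{{\cal S}}^T$, with interior $\Omega^\circ$. Fixing an arbitrary boundary point $q$ on some patch, I will show that the open segment from $p$ to $q$ stays inside $\Omega^\circ$, so that $q$ is visible from $p$; as $q$ ranges over $\partial\Omega$ this is exactly the assertion that $p$ is a kernel point. Throughout I use the stated convention that $n_i$ is the inward normal, together with the $C^1$ hypothesis and the observation that $n_i$ does not vanish in the interior of each patch, so that a unit inward normal $\hat n$ and the signed distance to $\partial\Omega$ (taken positive inside) are well defined and $C^1$ near every smooth boundary point.

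First I would verify that $p\in\Omega^\circ$. If $p$ lay on $\partial\Omega$, say $p=s_j(u_0,v_0)$, then the left-hand side of \eqref{eqn:visibility} at that point would equal $\langle 0,n_j\rangle=0$, violating the strict inequality. If $p$ lay in the exterior, then at the boundary point $q_0$ nearest to $p$ the vector $p-q_0$ is a positive multiple of the \emph{outward} normal, so $\langle p-q_0,n_j\rangle<0$, again contradicting \eqref{eqn:visibility}. Hence $p$ is interior, which is what lets me speak of a first exit of the segment below.

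The crux is a single orientation observation. Parametrize the segment by $\gamma(t)=p+t(q-p)$, $t\in[0,1]$, with $\gamma(0)=p$ interior by the previous step. At \emph{any} $t_0$ with $\gamma(t_0)=s_i(u,v)\in\partial\Omega$ we have $p-\gamma(t_0)=-t_0(q-p)$, so \eqref{eqn:visibility} gives
\[
0<\langle p-\gamma(t_0),\,n_i(u,v)\rangle=-t_0\,\langle q-p,\,n_i(u,v)\rangle,
\]
whence $\langle q-p,\,n_i\rangle<0$: the travel direction points \emph{strictly} outward at every boundary contact of the segment. Strictness is essential, since it forbids tangential contact, so every contact is a transversal interior-to-exterior crossing (the signed distance, positive inside, has derivative $\langle q-p,\hat n\rangle<0$ along $\gamma$ there). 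Suppose, for contradiction, that the open segment is not contained in $\Omega^\circ$; then there is a first exit $t^\ast\in(0,1)$, and just beyond $t^\ast$ the segment is exterior. I claim it can never meet $\partial\Omega$ again on $(t^\ast,1]$: at the smallest later contact the signed-distance computation forces the segment to be interior immediately before it, contradicting that it is exterior there. Yet $\gamma(1)=q\in\partial\Omega$ is such a contact, a contradiction. Hence the open segment lies in $\Omega^\circ$, $q$ is visible, and since $q$ was arbitrary $p$ is a kernel point.

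I expect the main obstacle to be the rigorous bookkeeping of this crossing argument: ruling out tangential or accumulating boundary contacts, and handling the non-smooth curves where two patches meet, where the normal is not unique. The strict inequality in \eqref{eqn:visibility} is precisely what controls the first difficulty, as it forbids tangencies and makes every contact a clean transversal crossing; the second can be dealt with by regarding such edges as a measure-zero set that the segment meets only as a limit of genuine interior crossings, reducing to the smooth case.
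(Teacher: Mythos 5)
Your proof is correct and follows essentially the same route as the paper's: both argue by contradiction that the segment from $p$ to a non-visible boundary point would have to leave the solid and then re-enter it, and that the strict inequality $\langle p-s_i(u,v),n_i(u,v)\rangle>0$ at the re-entry contact is impossible. Your version adds two welcome refinements the paper glosses over --- explicitly checking that $p$ is interior, and observing that strictness makes every contact a transversal interior-to-exterior crossing (so tangential contacts, which the paper defers with ``special care,'' are excluded outright) --- but the underlying crossing argument is the same.
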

\begin{proof}
As stated at the beginning of Section~\ref{sec:algoSec}, we assume $n_i(u,v)$ points inside ${\cal T}$. Under this assumption, an internal point $p$ is visible to a point on a surface $s_i(u,v)$, if the straight line segment between $p$ and $s_i(u,v)$ is in the positive half plane defined by the normal $n_i(u,v)$ and the point $s_i(u,v)$, meaning that the vector $p-s_i(u,v)$ and the normal $n_i(u,v)$ satisfy: $\langle p-s_i(u,v),n_i(u,v) \rangle > 0$.
Moreover, the line segment $\overline{p, s_i(u,v)}$ lies entirely inside ${\cal T}$.

Now, given an internal point $p$ that satisfies Equation~(\ref{eqn:visibility}), and assume, by contradiction, that there exists a point $s_j(u,v)$ that is not visible to $p$. Examine the line segment $L=\overline{p,s_j(u,v)}$. Since $s_j(u,v)$ is not visible to $p$, then, by the Jordan-Brouwer separation theorem~\cite{lima1988jordan}, and ignoring tangential contact(s) for now, there exists a finite segment of $L$ that lies entirely outside ${\cal T}$. Otherwise, if $L$ is entirely inside ${\cal T}$, then $s_j(u,v)$ is visible to $p$.
Let $\overline{s_l,s_k}$ be the segment of $L$ that is outside ${\cal T}$, where $s_l$ is the closest point to $p$ (See illustration on Figure~\ref{fig:visibility_crv}). 
Because, $\overline{s_l,s_k}$ stabs the boundary at $s_k$ from {\em outside}, $\langle p-s_k,n_k \rangle$ can not be positive, contradicting the assumption that $p$ satisfies Equation~(\ref{eqn:visibility}), for all $s_i$. With special care, and following similar lines, one can also handle tangential contacts.
\end{proof}

\begin{figure}
\centering
    \begin{minipage}{.48\textwidth}
        \centering
	\begin{tabular}{c}
		\mbox{\hspace{-0.21in}}
		\includegraphics[scale = 0.22]{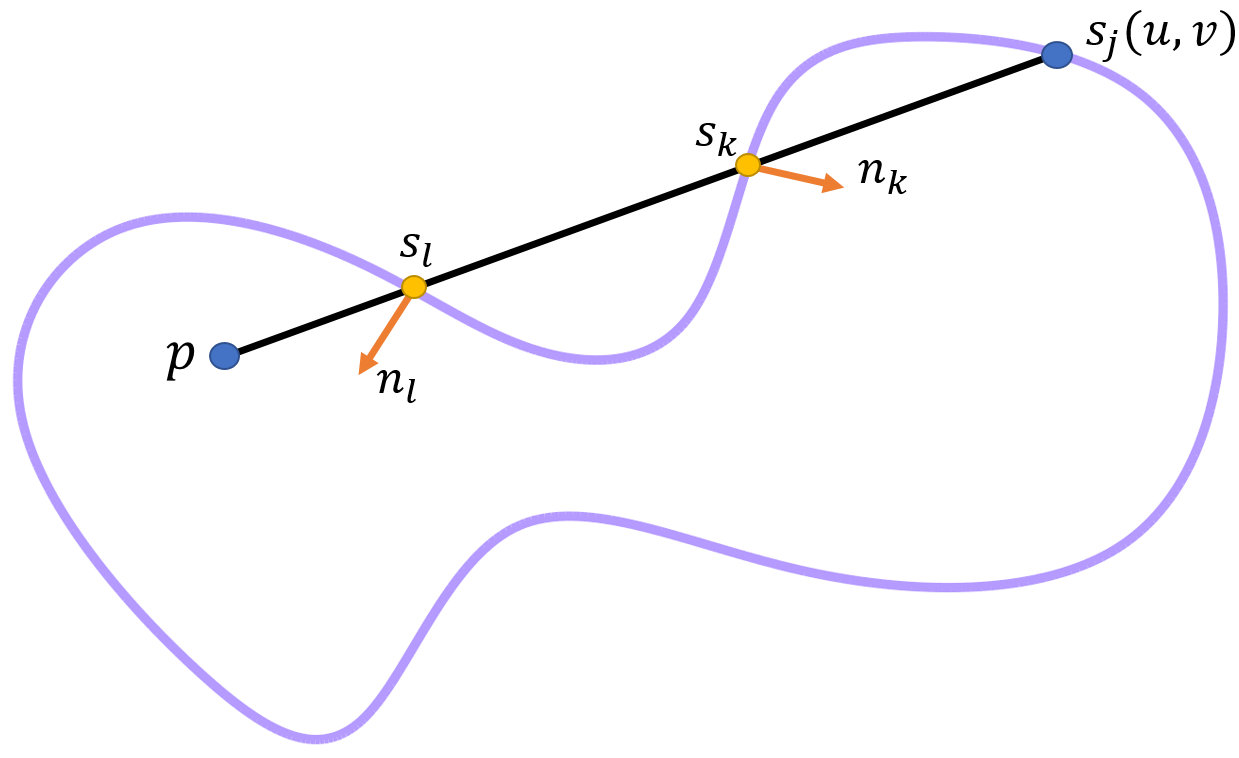}
	\end{tabular}
	\mbox{\vspace{-0.15in}}\\[-0.15in]
	\caption{Auxiliary illustration for Lemma~\ref{lemma:vis}.}
	\label{fig:visibility_crv}
    \end{minipage}
    \mbox{\hspace{0.1in}}
    \begin{minipage}{.48\textwidth}
        \centering
	\begin{tabular}{c}
		\mbox{\hspace{-0.21in}}
		\includegraphics[scale = 0.2]{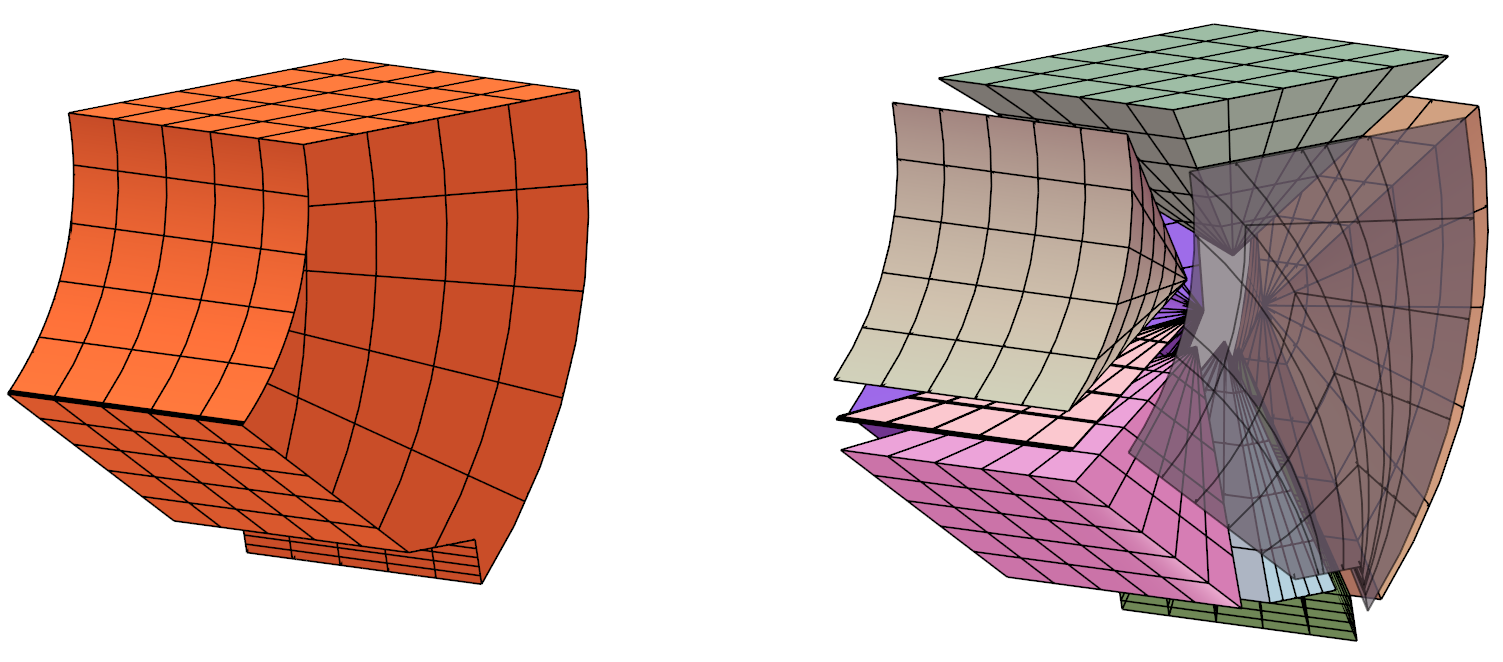} 
	\end{tabular}
	\begin{picture}(0,0)
	    \put(-35, 10){$(a)$}
	    \put(100, 10){$(b)$}
	\end{picture}
	\mbox{\vspace{-0.15in}}\\[-0.15in]
	\caption{Untrimming of a trimmed \Bezier{} trivariate, after finding a kernel point. A trimmed rational \Bezier{} trivariate of orders (4,2,2) having nine trimming trimmed surfaces in (a). Once a kernel point is found, nine (singular on the boundary) tensor product trivariates are constructed (b), and shown in an exploded view.}
	\label{fig:bez_kernel_untrim}
    \end{minipage}
\end{figure}

Lemma~\ref{lemma:vis} states that if all surfaces $s_i\in\overline{{\cal S}}^T$ are front facing with respect to $p$, then all $s_i\in\overline{{\cal S}}^T$ are visible to $p$, in a similar way to front face visibility in projections, in graphics.

Note that although the normal of the entire closed B-rep bounded by $\overline{{\cal S}}^T$ is not defined on intersection/stitching $C^0$ points between surfaces in $\overline{{\cal S}}^T$, the proof is still valid on such points, as Equation~(\ref{eqn:visibility}) still holds for each $s_i$  surface.

More details on freeform surface visibility can be found  in~\cite{elber1995arbitrarily}.
Equation~(\ref{eqn:visibility}) can be computed by symbolically representing this inner product ~\cite{Elber92} as a tensor product \Bspline{} function, and verifying the positivity of all the control coefficients.

As a side comment, the kernel, if any, can be bound using an approach that utilizes the intersections of all tangent planes at all parabolic points of $\overline{{\cal S}}$, assumed $C^2$~\cite{cohengeometric}. However, such an approach, that involves computing parabolic points, is computationally expensive and numerically challenging. Instead, and since herein we only seek a single kernel location, the trimmed parametric domain of ${\cal T}$ is \NewT{uniformly} sampled, and the visibility test of Equation~(\ref{eqn:visibility}) is verified on each sample. There could be more than one kernel sample that satisfies Equation~(\ref{eqn:visibility}). Among these samples, we, heuristically, pick the kernel point as the sample, $p$, that minimizes the following expression:
\begin{equation} \label{eqn:kernel_minmax_dist}
\max_{\forall s_i \in \overline{{\cal S}}^T}{(\min_{u,v}(\lVert s_i(u,v)-p\rVert))} - \min_{\forall s_j \in \overline{{\cal S}}^T}{(\min_{u,v}(\lVert s_j(u,v)-p\rVert))},
\end{equation}
that promotes kernel points for which the deviations of the maximal and minimal distance to the boundaries are minimal, as we try to avoid thin and tall trivariates in the output, as much as possible.

During the process of seeking a kernel point, as only $p$ changes during the domain sampling, caching computation results that are independent of $p$, such as $\langle s_i(u,v),n_i(u,v) \rangle$ in Equation~(\ref{eqn:visibility}), will result in a significant speed up improvement.

If a kernel point $p$ is found, a tensor product \Bspline{} trivariate $U_i(u,v,w)$ is constructed as a singular ruled trivariate~\cite{cohengeometric} between $p$ and each tensor product \Bspline{} surface $s_i(u,v) \in \overline{{\cal S}}^T$, as following:
\begin{equation}
	U_i(u,v,w) = (1-w)p + ws_i(u,v), w\in[0,1].
\end{equation} 

Since $s_i(u,v)$ is guaranteed to (possibly) have singularities only on the boundaries (property of the surfaces' untrimming algorithm~\cite{massarwi2018untrimming}), so is $U_i(u,v,w)$. See an example of untrimming of a trimmed \Bezier{} trivariate in Figure~\ref{fig:bez_kernel_untrim}.

Note that although the trimming curves are typically the result of Boolean set operations, and hence piecewise linear, the algorithm can handle smooth trimming curves of arbitrary order. The trivariates' construction procedures can handle higher order curves as is. However, a \Bspline{} curve fitting process \NewT{needs} to be applied in the back projection approximation in the parametric space, as discussed in Section~\ref{sec:ParamSpaceApprox}.

\subsubsection{Subdivision of ${\cal T}$} \label {sec:No_Kernel_Subd}
At each iteration, if no kernel point is found, ${\cal T}$ is subdivided 
into several trimmed trivariates, and the untrimming algorithm is recursively applied on each part. Several strategies can be utilized to choose the subdivision direction and value. In this work, the subdivision is performed, using Algorithm~\ref{alg:TrimTVSubdAlgo}, along the center of the bounding box of ${\cal T}$, in the parametric space, and the parametric direction is the direction of the longest dimension of the bounding box. This strategy ensure that the trimmed domain is getting smaller with each iteration.  

\NewT{During the iterative subdivision process, a non-trimmed (full tensor product) trivariate domains can be obtained, having six tensor product isoparametric boundary surfaces. In such cases, there is obviously no need to compute a kernel point and the trivariate is not subdivided further.}
Further, during the subdivision process, \NewT{not all boundary surfaces} are subdivided, in practice, in each iteration. Hence, we cache the untrimming and approximation results of the surfaces (results of steps~\ref{bezUntrimUntrimming} and~\ref{bezUntrimApprox} in Algorithm~\ref{alg:BezierUntrimAlgo},  respectively) to be used in \NewT{forthcoming} iterations.

We consider a special case that is handled differently: If the size of $s_i$, and the aperture of the normal cone of $s_i$ for each trimming surface, $s_i \in {\overline{\cal S}}^T$, is less than a predefined thresholds $\epsilon_e$ and $\epsilon_\theta$ respectively, then each $s_i$ is approximated by a planar surface, resulting in a polyhedron, $C$, that approximates ${\overline{\cal S}}^T$. Then, kernel points of $C$ can always be found, following ~\cite{marzal2012three} for example.
\begin{lemma}
	Assume all surfaces $s_i \in {\overline{\cal S}}^T$ have a bounded
        curvature.  If the subdivision process satisfies the following:
	\begin{itemize}
		\item The subdivision is performed at the center of the bounding box of ${\cal T}$ along the direction of the longest dimension of the bounding box, and, 
		\item ${\overline{\cal S}}^T$ is approximated by a polyhedron when the size of each $s_i \in {\overline{\cal S}}^T$ and the aperture of the normal cone of each $s_i \in {\overline{\cal S}}^T$, is less than a predefined thresholds, $\epsilon_e$ and $\epsilon_\theta$, respectively,
	\end{itemize}
	then,  Algorithm~\ref{alg:BezierUntrimAlgo} terminates after a finite number of iterations.
\end{lemma}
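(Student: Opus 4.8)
The plan is to model one run of Algorithm~\ref{alg:BezierUntrimAlgo} as a recursion tree: the root is the input trimmed \Bezier{} trivariate ${\cal T}$, and the children of a node are the sub-trivariates returned by \textbf{TrimTrivarSubdiv} in the no-kernel branch (lines~\ref{Alg:NoKernelStart}--\ref{Alg:NoKernelEnd}). A node is a leaf precisely when the algorithm does not recurse there, i.e.\ when a kernel point is found (lines~\ref{Alg:FindKernelStart}--\ref{Alg:FindKernelEnd}) or the cell has become a full, non-trimmed tensor-product trivariate. To prove termination I would show this tree is finite. Each node has only finitely many children, since a trimmed \Bezier{} trivariate carries finitely many trimming surfaces of finite complexity, and cutting by one iso-plane and regrouping into connected components (\textbf{GroupNewTrimTV}) produces finitely many trimmed sub-trivariates. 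Hence, by K\"onig's lemma, it suffices to prove that every root-to-leaf branch has bounded depth.

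The depth bound will come from the shrinking of the parametric bounding box. Because each subdivision bisects the \emph{longest} dimension of the current box, greedy longest-edge bisection drives the maximum edge to zero: within any three consecutive bisections along a branch, every edge exceeding half the current maximum has been halved, so if $M_0$ denotes the initial maximum edge, a cell at depth $k$ has maximum edge at most $M_0\,2^{-\lfloor k/3\rfloor}$ and diameter at most $\sqrt{3}\,M_0\,2^{-\lfloor k/3\rfloor}$. Thus for every $\delta>0$ there is a depth $k^\ast(\delta)$ beyond which all cells have diameter below $\delta$.

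Next I would turn smallness of a cell into satisfaction of the two thresholds, invoking the bounded-curvature hypothesis. Restricted to a cell of diameter $\delta$, a surface $s_i\in\overline{{\cal S}}^T$ has size at most $\delta$, so the $\epsilon_e$ size test is met once $\delta<\epsilon_e$; and if its curvature is bounded by $\kappa$, the unit normal turns at rate at most $\kappa$ along any arc, so the aperture of its normal cone over the cell is at most $\kappa\delta$ (the iso-plane faces created by subdivision are flat and add aperture $0$). Choosing $\delta<\min(\epsilon_e,\epsilon_\theta/\kappa)$ forces both thresholds simultaneously, which by the second hypothesis triggers the polyhedral approximation $C$ of $\overline{{\cal S}}^T$, for which a kernel point always exists (e.g.\ following~\cite{marzal2012three}); that node is then a leaf. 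Together with the full-cell base case, this shows every branch terminates no later than depth $k^\ast(\delta)$ for $\delta=\min(\epsilon_e,\epsilon_\theta/\kappa)$, giving the required depth bound and hence, with finite branching, a finite tree.

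I expect the main obstacle to be the passage to the polyhedral regime: one must argue rigorously that a bounded-curvature patch confined to a cell of vanishing diameter becomes $\epsilon_\theta$-close to planar, and, more delicately, that the small polyhedron $C$ bounded by the resulting near-planar faces is star-shaped, so that a single kernel point is indeed available. A secondary technical point is the diameter-decay estimate under non-uniform (longest-edge) bisection, where the cyclic halving of the current longest edge is exactly what prevents one dimension from being refined indefinitely while another remains large.
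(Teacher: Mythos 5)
Your proposal is correct and follows essentially the same route as the paper's own proof: subdivision along the longest bounding-box dimension drives the cell (and hence each $s_i$) size to zero, bounded curvature then forces the normal-cone aperture below $\epsilon_\theta$ once the size is below a suitable $\epsilon_e$, and the polyhedral fallback guarantees a kernel point and hence a leaf. Your version is in fact tighter than the paper's, which only asserts that the size is ``getting smaller at each iteration'' and ignores the finite-branching issue, whereas you supply the quantitative $M_0\,2^{-\lfloor k/3\rfloor}$ decay for longest-edge bisection and invoke K\"onig's lemma; the obstacles you flag (intrinsic vs.\ extrinsic smallness of a bounded-curvature patch, and star-shapedness of the limiting polyhedron) are likewise left implicit in the paper.
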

\begin{proof}
	Each surface $s_i \in {\overline{\cal S}}^T$ is represented as a \Bspline{} surface, and $s_i$ has a bounded curvature, by assumption. In other words, for each threshold $\epsilon_\theta$, there exists a threshold $\epsilon_e$ such that if the size of $s_i$ is less than $\epsilon_e$ (which can be achieved by a finite number of subdivisions), then the aperture of the normal cone of $s_i$ is less than $\epsilon_\theta$. Since the subdivision is performed along the longest dimension, it is guaranteed that the size of $s_i$ is getting smaller at each iteration. 

	Eventually, after a finite number of iterations, and if no kernel point is found, the size of each surface $s_i$ will be less than $\epsilon_e$ and the aperture of the normal cone of $s_i$ will be less than $\epsilon_\theta$. At that iteration, ${\overline{\cal S}}$ is approximated by a polyhedron and the algorithm terminates.
\end{proof}

\section{Results}	\label{sec:resultsSec}
The algorithms presented in this paper are all implemented in the IRIT~\cite{IRIT} solid modeling kernel. We now present results of the untrimming algorithm applied on several trimmed trivariates and V-rep models. The following examples were synthesized on a 3.4 GHz Intel i7 CPU with 32 GB RAM in a single thread mode and Windows 7. Practically, in all the examples presented in this paper, we never failed to find a kernel point and never had to resort to the polyhedron approximation (Recall Section~\ref{sec:No_Kernel_Subd}).


\begin{table*}
	\small
	\centering
	\begin{tabular}{l|cc|ccccc}
		\toprule
		\multirow{3}{*}{Model} &
		\multicolumn{2}{c|}{Input} &
		\multicolumn{5}{c}{Output} \\\cline{2-8}
		& {\# Trimmed}& {\# Trimming} &  \# Tensor       & \# Total     & Maximal      & {Total } & (E)uclidean  \\
		& trivariates & {surfaces} &     product                   &  domain      & subdivision  &   time   &    or   \\
		&             &            &    Trivariates                    &  subdivisions  			     & depth        &  (Sec.)  & (P)arametric    \\
		\midrule
		Figure~\ref{fig:trivar2_untrim}    &    1   &    11    &   262   &   30   &   9   &    251    & P  \\ [0.5ex]
		Figure~\ref{fig:solid3_untrim}     &    1   &    24    &   76    &    5   &   3   &     52    & P  \\ [0.5ex]
		Figure~\ref{fig:trimmed5_untrim}   &    1   &    24    &   406   &   13   &   5   &   10.5    & E  \\ [0.5ex]
		Figure~\ref{fig:vsolid8_untrim}    &    7   &    64    &   294   &   32   &   7   &   13.2    & E  \\ [0.5ex]
		Figure~\ref{fig:vsolid9_untrim}    &   20   &   168    &   370   &   22   &   4   &   20.9    & E  \\ [0.5ex]
		Figure~\ref{fig:volume_untrim}     &    1   &    34    &    56   &    3   &   2   &    2.9    & E  \\ [0.5ex]
		\bottomrule
	\end{tabular}
	\caption{Statistics on the process of untrimming the trimmed trivariates 
		presented in Figures~\ref{fig:trivar2_untrim}-\ref{fig:volume_untrim}.}
	\label{table:vuntrimming_stats}
\end{table*}

Figure~\ref{fig:trivar2_untrim} shows the result of untrimming a trimmed \Bezier{} trivariate of orders (2,2,3). The untrimming is performed in the parametric space, and the trimming surfaces are untrimmed and approximated by a set of bi-quadratic surfaces. The result consists of 262 tensor product \Bspline{} trivariates that covers the trimmed parametric domain of the trimmed trivariate. The whole untrimming process took 251 seconds, out of which the approximation of the untrimmed surfaces in the parametric space took 107 seconds.

\begin{figure}
	\begin{center}
		\begin{tabular}{cccc}
			\mbox{\hspace{-0.21in}}
			\includegraphics[scale = 0.32]{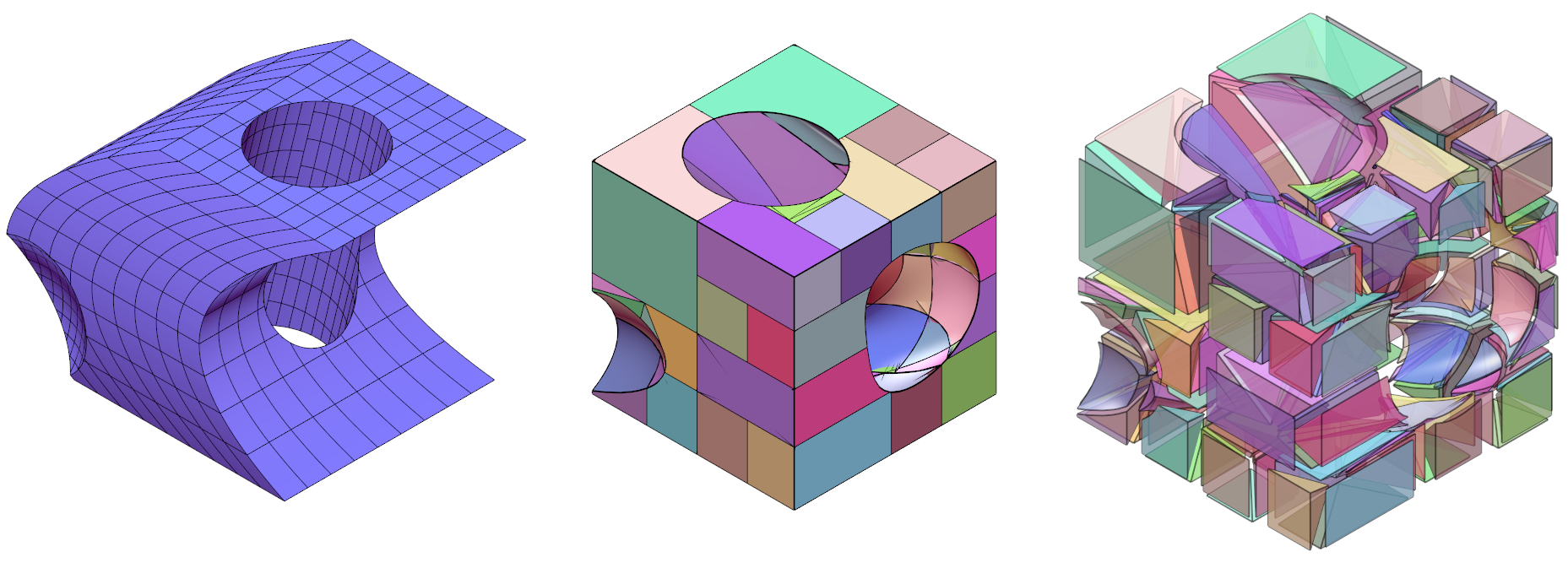} &
		\end{tabular}
	\end{center}
	\begin{picture}(0,0)
  	    \put(105, 30){$(a)$}
	    \put(250, 25){$(b)$}
	    \put(410, 25){$(c)$}
	\end{picture}
	\mbox{\vspace{-0.5in}}\\[-0.5in]
	\caption{Untrimming of a trimmed \Bezier{} trivariate. (a) A trimmed \Bezier{} trivariate of orders (2,2,3). (b) Untrimming result in the parametric space of (a), yielding 262 tensor product \Bspline{} trivariates shown in an exploded semi-transparent view in (c).}
	\label{fig:trivar2_untrim}
\end{figure}

In Figure~\ref{fig:solid3_untrim}, a V-rep model consisting of one trimmed trilinear trivariate is untrimmed in the parametric space, resulting in 76 tensor product trivariates that cover the trimmed parametric domain of the model. The trimming surfaces are untrimmed and approximated by bi-quadratic surfaces in the parametric domain. The untrimming algorithm took 52 seconds, out of which the approximation of the untrimmed surfaces in the parametric domain took 33 seconds.
\begin{figure*}
	\begin{center}
		\begin{tabular}{cccc}
			\mbox{\hspace{-0.21in}}
			\includegraphics[scale = 0.33]{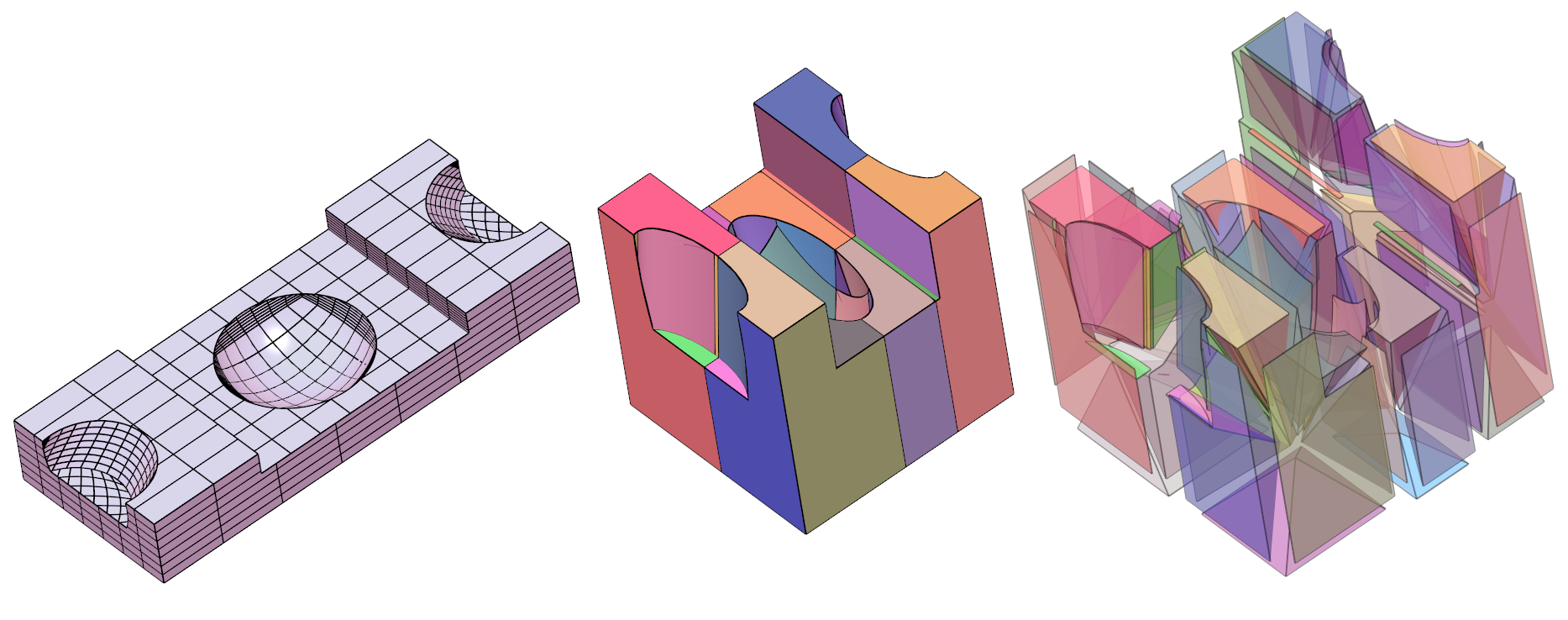} &
		\end{tabular}
	\end{center}
	\begin{picture}(0,0)
	\put( 65, 25){$(a)$}
	\put(250, 25){$(b)$}
	\put(420, 25){$(c)$}
	\end{picture}
	\mbox{\vspace{-0.5in}}\\[-0.5in]
	\caption{Untrimming of a V-model. (a) The trimmed trivariate of the V-model. Untrimming result consisting of 76 tensor product \Bspline{} trivariates composing the parametric domain in (b) and shown in semi-transparent exploded view in (c).}
	\label{fig:solid3_untrim}
\end{figure*}

Figure~\ref{fig:trimmed5_untrim} shows untrimming of a complex trimmed
trilinear \Bspline{} trivariate having (4,4,3) internal \Bezier{}
domains. The untrimming is performed in the Euclidean space, resulting
in 406 tensor product \Bspline{} trivariates. The trimmed trivariate
is first subdivided at all internal knots into 42 trimmed \Bezier{}
trivariates (and not $48=4\times4\times3$, as some domains are
completely trimmed away), and then each trimmed \Bezier{} trivariate
is untrimmed into a set of tensor product \Bspline{} trivariates. The
untrimming process took 10.5 seconds, out of which the subdivision of
the trimmed trivariate into trimmed \Bezier{} trivariates took 1.1
seconds.

\begin{figure}
\centering
    \begin{minipage}{.48\textwidth}
        \centering
	\begin{tabular}{c}
	    \mbox{\hspace{-0.21in}}
	    \includegraphics[scale = 0.14]{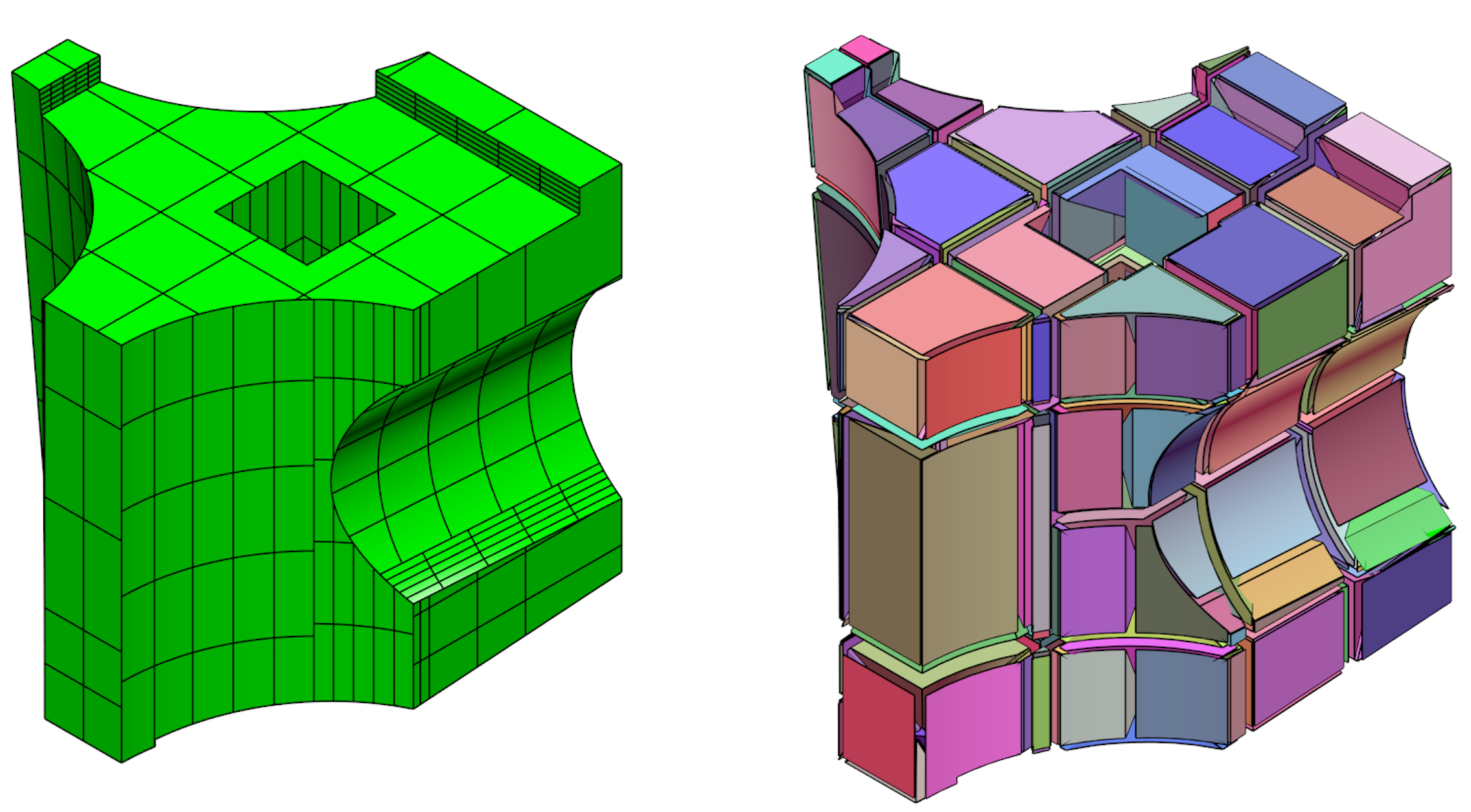}
	\end{tabular}
	\begin{picture}(0,0)
	    \put(-40, 15){$(a)$}
	    \put( 90, 15){$(b)$}
	\end{picture}
	\mbox{\vspace{-0.2in}}\\[-0.2in]
	\caption{Untrimming of a trilinear trimmed \Bspline{} trivariate with ($4\times4\times3$) \Bezier{} domains. (a) The trimmed \Bspline{} trilinear trivariate. (b) Untrimming of (a) yielding 406 tensor product \Bspline{} trivariates displayed in an exploded view.}
	\label{fig:trimmed5_untrim}
    \end{minipage}
    \mbox{\hspace{0.1in}}
    \begin{minipage}{.48\textwidth}
	\begin{tabular}{cc}
	    \mbox{\hspace{-0.21in}}
	    \includegraphics[scale = 0.18]{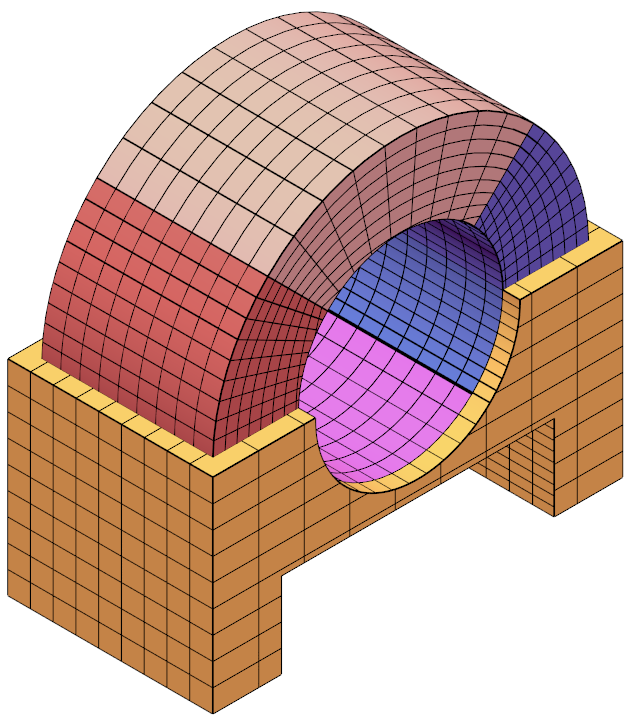} &
	    \includegraphics[scale = 0.18]{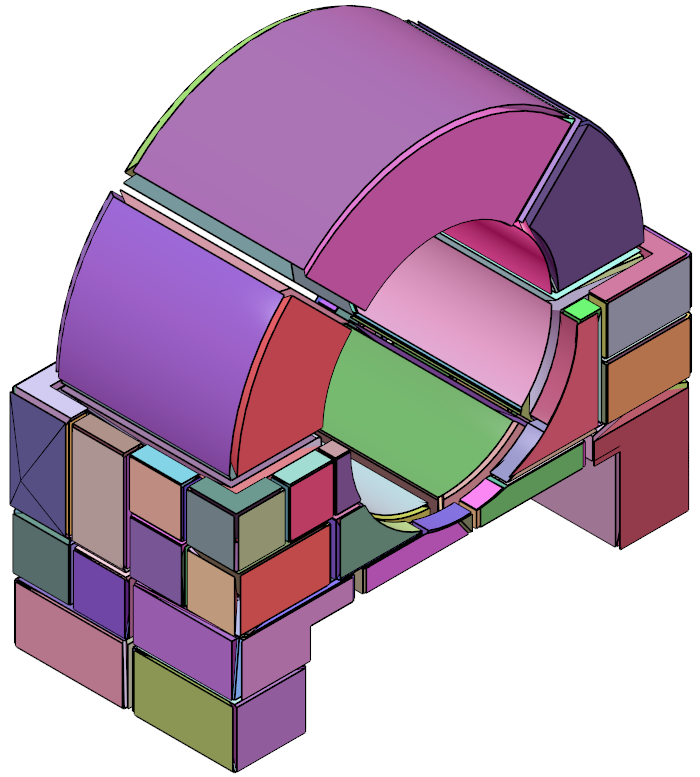}
	\end{tabular}
	\begin{picture}(0,0)
	    \put(60, 15){$(a)$}
	    \put( 190, 15){$(b)$}
	\end{picture}
	\mbox{\vspace{-0.2in}}\\[-0.2in]
	\caption{Untrimming of a V-model. (a) The V-model consisting of seven trimmed trivariates. (b) The result of untrimming each of the trimmed trivariates of the model, resulting in 294 tensor product \Bspline{} trivariates displayed in an exploded view.}
	\label{fig:vsolid8_untrim}
    \end{minipage}
\end{figure}

Figures~\ref{fig:vsolid8_untrim} and~\ref{fig:vsolid9_untrim} show
two untrimming results of volumetric models (V-models). Each V-model
consists of several trimmed trivariates. Since the trimmed trivariates
do not share the same parametric space, and in order to emphasize how
the untrimming result covers the entire V-model, the untrimming is
performed in the Euclidean space. In Figure~\ref{fig:vsolid8_untrim},
the V-model, consisting of seven mutually exclusive trimmed
trivariates, is untrimmed in 13.2 seconds, resulting in 294 tensor
product \Bspline{} trivariates. In Figure~\ref{fig:vsolid9_untrim},
another V-model, consisting of 20 mutually exclusive trimmed
trivariates, is untrimmed in 20.9 seconds, resulting in 370 tensor
product \Bspline{} trivariates.

\begin{figure}
\centering
    \begin{minipage}{.58\textwidth}
	\begin{tabular}{cc}
	    \mbox{\hspace{-0.15in}}
	    \includegraphics[scale = 0.36]{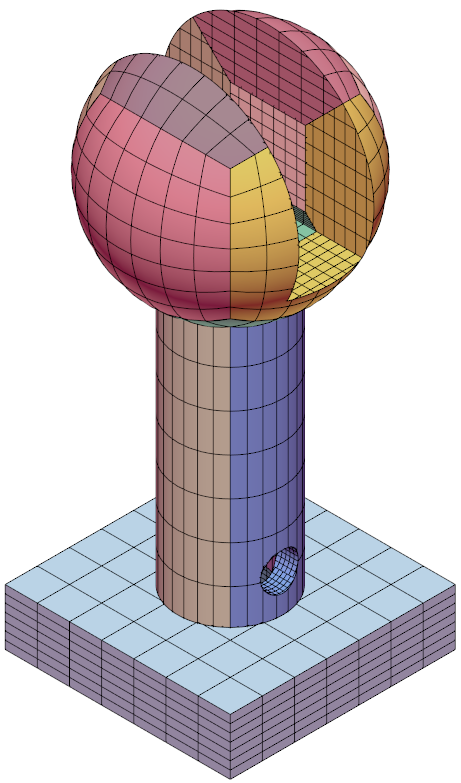} &
	    \includegraphics[scale = 0.36]{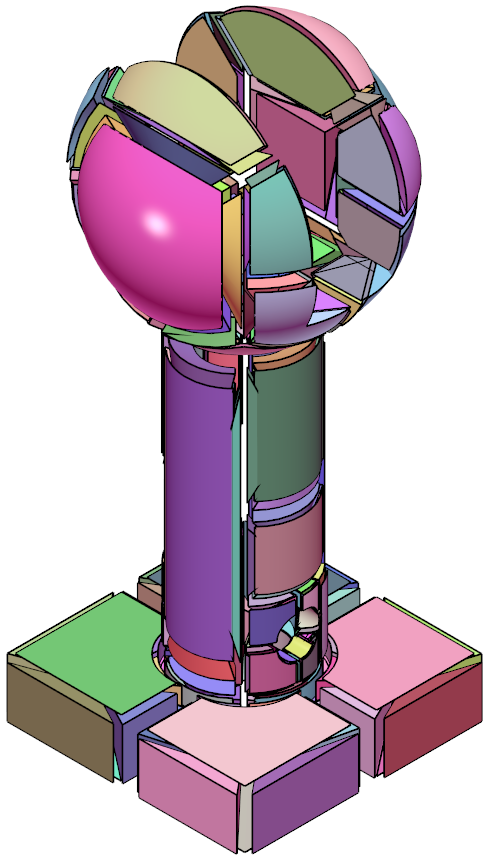}
	\end{tabular}
	\begin{picture}(0,0)
	    \put(105, 15){$(a)$}
	    \put(245, 15){$(b)$}
	\end{picture}
	\mbox{\vspace{-0.35in}}\\[-0.35in]
	\caption{Untrimming of a V-model. (a) The V-model consisting of 20 trimmed trivariates. (b) The result of untrimming each of the trimmed trivariates of the model, resulting in 370 tensor product \Bspline{} trivariates displayed in an exploded view.}
	\label{fig:vsolid9_untrim}
    \end{minipage}
    \mbox{\hspace{0.05in}}
    \begin{minipage}{.38\textwidth}
	\begin{tabular}{c}
	    \mbox{\hspace{-0.15in}}
	    \includegraphics[scale = 0.21]{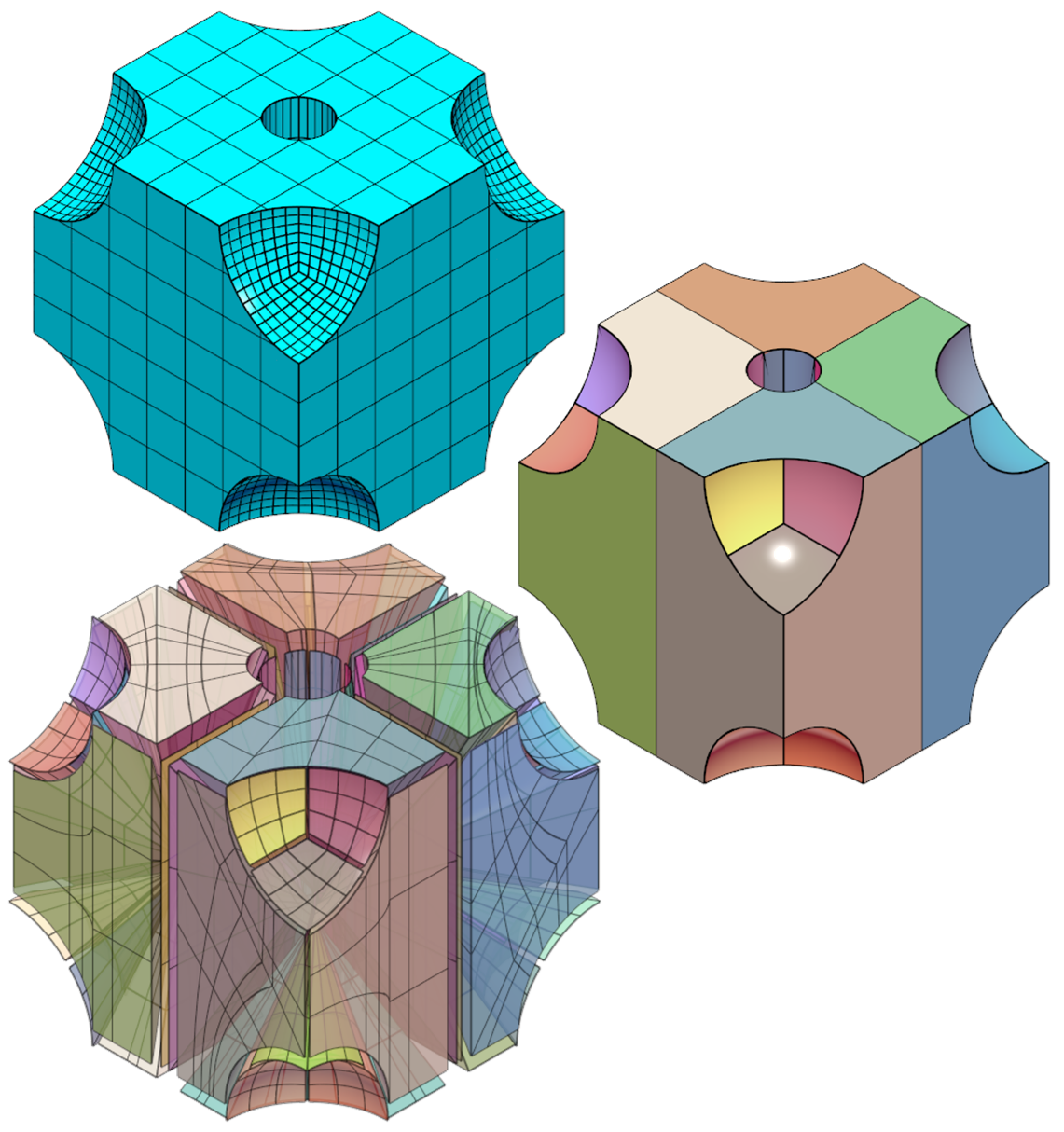}
	\end{tabular}
	\begin{picture}(0,0)
	    \put( -2, 120){$(a)$}
	    \put(155,  60){$(b)$}
	    \put( 90,  20){$(c)$}
	\end{picture}
	\mbox{\vspace{-0.3in}}\\[-0.3in]
	\caption{(a) A trimmed trivariate constructed by subtracting from a unit cube eight spheres of radius 0.3 centered at the eight corners of the cube, and a cylinder of radius 0.1 from the center of the cube. (b) Untrimming of (a) results in 56 tensor products that are displayed in an exploded view in (c).}
	\label{fig:volume_untrim}
    \end{minipage}
\end{figure}

Precise computation of integral properties over trimmed trivariates
are more difficult than over tensor products.  In
Figure~\ref{fig:volume_untrim}, we show one application of the
untrimming algorithm. We precisely compute the trimmed trivariate's
volume by first untrimming the trimmed trivariate into tensor
products, and then computing the volume of all covering tensor
products.  Since we use symbolic spline integration~\cite{Elber92},
the geometry must be (piecewise) polynomial (and not rational). Hence,
arcs and circles are approximated using piecewise polynomials to an
accuracy of ${\sim}10^{-3}$.  The object in
Figure~\ref{fig:volume_untrim}(a) is constructed by subtracting the
following from a unit cube: (i) eight spheres of radii 0.3 centered at
the corners of the cube, and (ii) a cylinder with a radius of 0.1
along the center of the unit cube.  The analytic value of the trimmed
cube's volume is 0.855486. While the value we compute is 0.855284, a
result that is well within the arc approximation.

Finally, some statistics on the untrimming process
(Algorithm~\ref{alg:BezierUntrimAlgo}) of the trimmed trivariates in
Figures~\ref{fig:trivar2_untrim}-\ref{fig:volume_untrim} are presented
in Table~\ref{table:vuntrimming_stats}. In the input part, the number
of trimmed trivariates in each model and the number of total trimming
surfaces for all the trimmed trivariates are presented in the first
and the second columns respectively.  In the output section, the first
column shows the number of tensor product trivariates in the result of the
untrimming algorithm. The second column shows the total number of
subdivisions of the domain performed during the untrimming
process. The third column shows the maximal depth of the recursive
subdivision calls, in the untrimming algorithm. The fourth column
shows the total running time, in seconds, and the last \NewT{column}
indicates if the untrimming is done in the parametric space or in the
Euclidean space.

\section{Analysis using untrimmed trivariates} \label{sec:analysis}
In this section, we illustrate the use
of the presented untrimming methodology for
performing isogeometric analysis in domains defined with trimmed B-spline trivariates. 
As a matter of example we focus in this section on a linear elasticity problem setting,
that is governed by the variational equation:
\begin{equation} \label{eq:elasticity}
\int_\mathcal{T}\mu(\bm{x})\nabla^s\bm{u}(\bm{x}):\nabla^s\bm{v}(\bm{x})\text{d}\bm{x}
+\int_\mathcal{T}\lambda(\bm{x})\nabla\cdot\bm{u}(\bm{x})\,\nabla\cdot\bm{v}(\bm{x})\text{d}\bm{x}
=\int_\mathcal{T}\bm{f}(\bm{x})\cdot\bm{v}(\bm{x})\text{d}\bm{x}
+\int_{\mathcal{S}_N}\bm{g}(\bm{x})\cdot\bm{v}(\bm{x})\text{d}\bm{x},
\end{equation}
where $\bm{u}:\mathcal{T}\to\mathbb{R}^3$ is the elastic displacement at every point of the domain,
the problem unknown, and $\bm{v}:\mathcal{T}\to\mathbb{R}^3$ corresponds to the test functions.
On the other hand $\bm{f}:\mathcal{T}\to\mathbb{R}^3$ and 
$\bm{g}:\mathcal{S}_N\to\mathbb{R}^3$ are the volumetric forces (e.g.\ self-weight) and the external loads applied
on the external boundary (e.g.\ pressure). $\mathcal{S}_N$ is a subset of the exterior boundary $\mathcal{S}$ in which
external loads are applied.
Finally, $\lambda$ and $\mu$ are the Lam\'e parameters that charaterize the behaviour of the elastic material,
that may change from point to point.
For the sake of brevity, prescribed displacements (Dirichlet boundary conditions) are not discussed here.
In an IGA context, both $\bm{u}$ and $\bm{v}$ are discretized by means of trivariate B-splines:
\begin{equation}
	\bm{u}=\sum_{i=0}^{l}\sum_{j=0}^{m}\sum_{k=0}^{n}
	\bm{u}_{i,j,k} B_{i,d_{u}}(u) B_{j,d_{v}}(v) B_{k,d_{w}}(w),
\end{equation}
where the coefficients $\bm{u}_{i,j,k}\in\mathbb{R}^3$ are the problem unknowns ($\bm{v}$ is discretized in the same way).
We refer the interested readers to \cite{iga} for a more detailed discussion about the fundamentals of IGA.
\NewT{Thus, in an isoparametric framework we use the same spline space for describing both the trivariate $T$ (as in Equation \eqref{eqn:triv}) and discretizing the solution (trial) $\bm{u}$ and the test functions $\bm{v}$.
For the analysis, the support of functions $B_{i,d_{u}}$, $B_{j,d_{v}}$, $B_{k,d_{w}}$ is limited to the active region of the domain defined by the trimming. Therefore, those functions whose support is completely outside of the trimmed domain will not be considered in the analysis.}

In order to compute the integrals present in the Equation \eqref{eq:elasticity}
in the domain $\mathcal{T}$ (and on the boundary $\mathcal{S}_N$)
we decompose the integral over the full domain as the sum of the integrals
in every single \Bezier{} element contained in the domain.
Thus, for computing the integral of a generic quantity $\alpha(\bm{x})$ over
the domain $\mathcal{T}$ we split the integral as:
$\int_\mathcal{T}\alpha(\bm{x})\text{d}\bm{x}=\sum^{n_b}_{i}\int_{\hat{\mathcal{B}_i}}\hat\alpha(\bm{x})\text{d}\bm{x}$,
where $\hat{\mathcal{B}_{i}}$ are the representation, in the parametric domain of $T$,
of the trimmed \Bezier{} trivariates $\mathcal{B}_{i}$ (such that $\mathcal{T}=\cup^{n_b}_{i}\mathcal{B}_{i}$),
and $n_b$ is the number of trivariates.
$\hat\alpha(\bm{x})$ corresponds the pull-back of $\alpha(\bm{x})$.
Thus, when computing $\int_{\hat{\mathcal{B}_i}}\hat\alpha(\bm{x})\text{d}\bm{x}$ for $i=1,\dots,n_b$
there exist two possible situations:
\begin{itemize}
\item If $\hat{\mathcal{B}_i}$ is not a trimmed trivariate, but a full one, then
a \NewT{Gaussian} quadrature rule is applied for computing the integral.
This is the case of standard IGA methods for non-trimmed domains.
\item Otherwise, if $\hat{\mathcal{B}_i}$ is a trimmed trivariate, the element is untrimmed
according to methodology presented in Section \ref{sec:algoSec}, and the integral
is computed using the resulting untrimming trivariates as:
\begin{equation}
\int_{\hat{\mathcal{B}_i}}\hat\alpha(\bm{x})\text{d}\bm{x} =
\sum^{n_{u,i}}_{j=1}\int_{\tau_{i,j}}\hat\alpha(\bm{x})\text{d}\bm{x},\qquad\textrm{for }i=1,\dots,n_b,
\end{equation}
where $\tau_{i,j}$ is the $j$-th untrimming tensor product trivariate of the \Bezier{} element $\hat{\mathcal{B}_i}$,
with $j=1,\dots,n_{u,i}$.
\end{itemize}
A discussion regarding the integration of the trimmed and non-trimmed \Bezier{} elements can be found 
in \cite{Rank2012}. It is important to remark here that the tensor product trivariates $\tau_{i,j}$ 
are only used for integration purposes. Therefore, they are not required
to have high-quality \NewT{Jacobians}, and it is sufficient that they are singular only on their boundary.

\begin{remark}
To perform analyses, in an IGA framework, in the case of computational domains created as the union of more than one trimmed B-spline trivariate
(e.g.\ Figures~\ref{fig:vsolid8_untrim} and \ref{fig:vsolid9_untrim})
is not as straightforward as for single trimmed trivariates.
These situations involve not only the precise integration of the operators described in Equation~\eqref{eq:elasticity},
but also the consistent gluing of all the partial solution discretizations, defined for every single trivariate, in the intersection regions.
The treatment of this kind of domains is out of the scope of this paper.
\end{remark}

In order to illustrate the potentiality of the presented procedure, we perform linear elasticity numerical
experiments for the trimmed geometries shown in Figures \ref{fig:trivar2_untrim}, \ref{fig:trimmed5_untrim} and \ref{fig:volume_untrim}.

The same elastic material is considered in all cases, being the Young modulus and the Poisson ratio,
$E=1\,\text{MPa}$ and $\nu=0.3$, respectively (where $\lambda=E\nu/(1+\nu)(1-2\nu)$ and $\mu=E/2(1+\nu)$).
The analysis were performed using the IGA library igatools described in \cite{Igatools2015}.

\NewT{All three cases present analogous loading conditions:
one face is completely fixed (no displacement in any direction is allowed)
while the opposite face is pulled perpendicularly (the pulled face is free to deform transversally). In these cases, Dirichlet boundary conditions are applied on faces of $T$ (that could potentially be trimmed) in a strong way:
the degrees of freedom associated to the basis functions whose traces have support on those faces are prescribed. On the other hand, the imposition of Dirichlet boundary conditions on trimming boundaries (that are not faces of $T$)
requires the use of weak imposition methods. This is out of the scope of this paper, however, we refer the interested reader to \cite{ruess_weakly_2013} and \cite{buffa_minimal_2019}, and references therein.}

In Figure \ref{fig:analysis}, we gathered some of the obtained results.
\begin{figure}
\centering
	\begin{tabular}{ccc}
	    \includegraphics[width=0.32\textwidth]{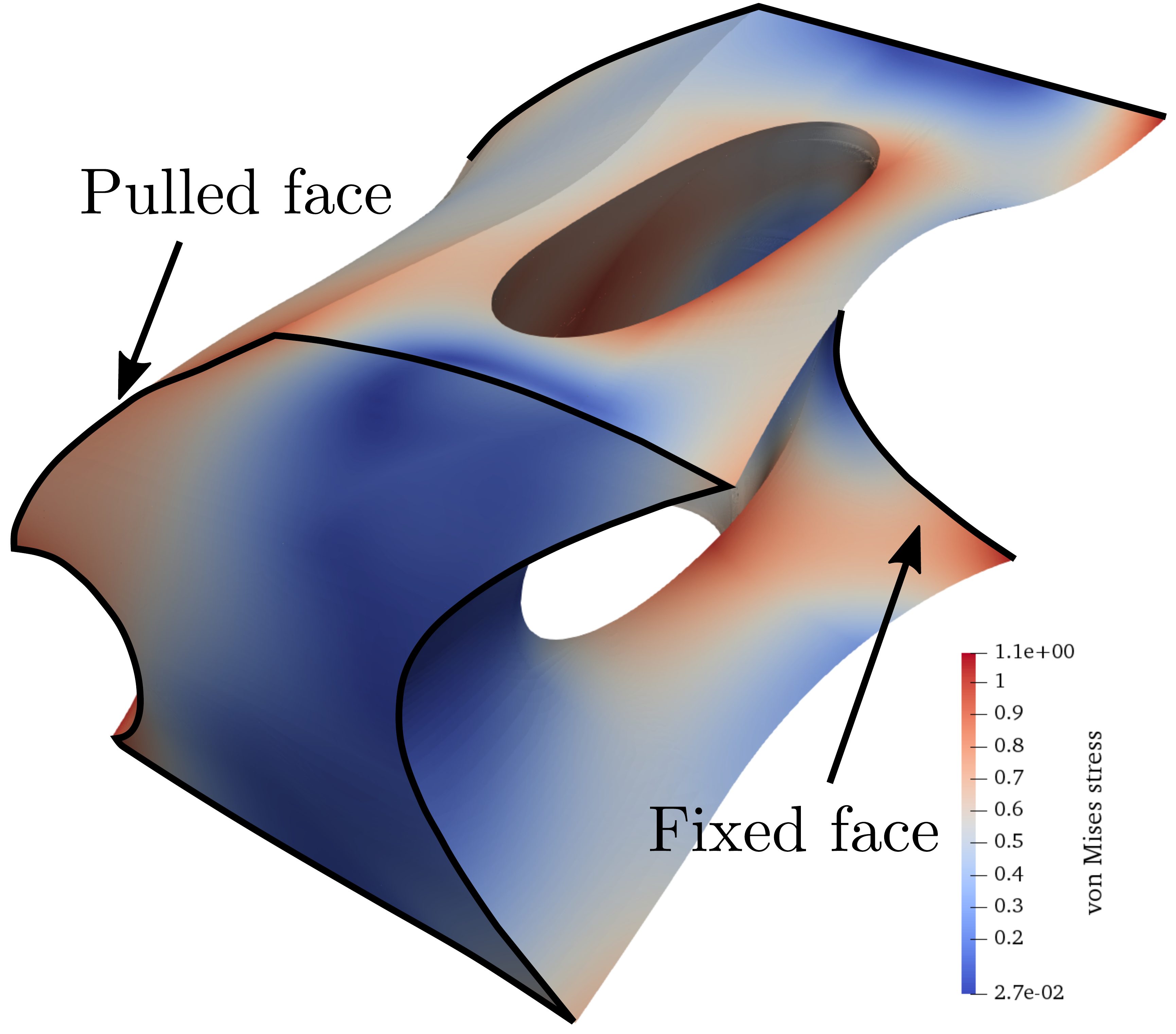} &
	    \includegraphics[width=0.27\textwidth]{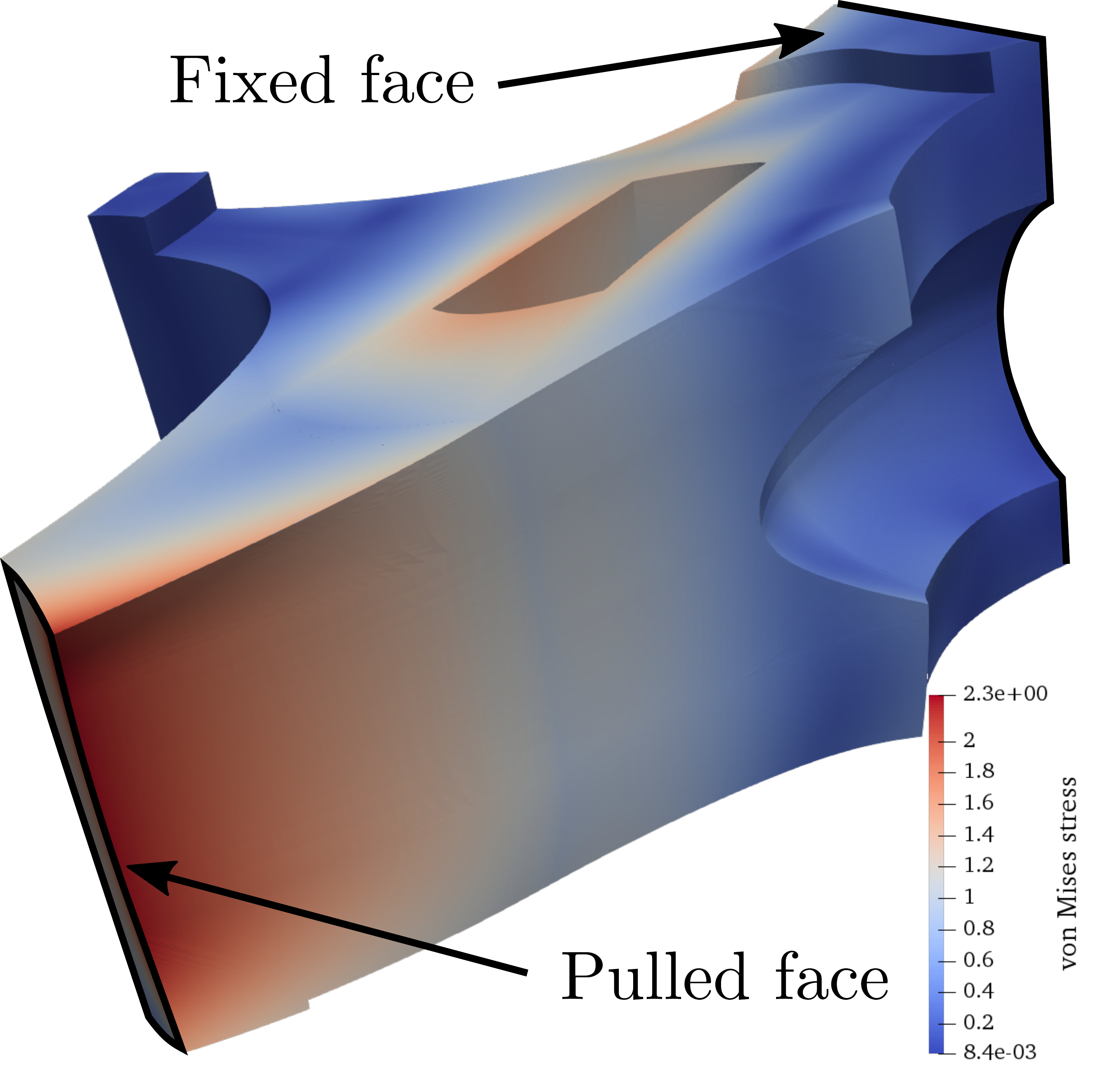} &
	    \includegraphics[width=0.34\textwidth]{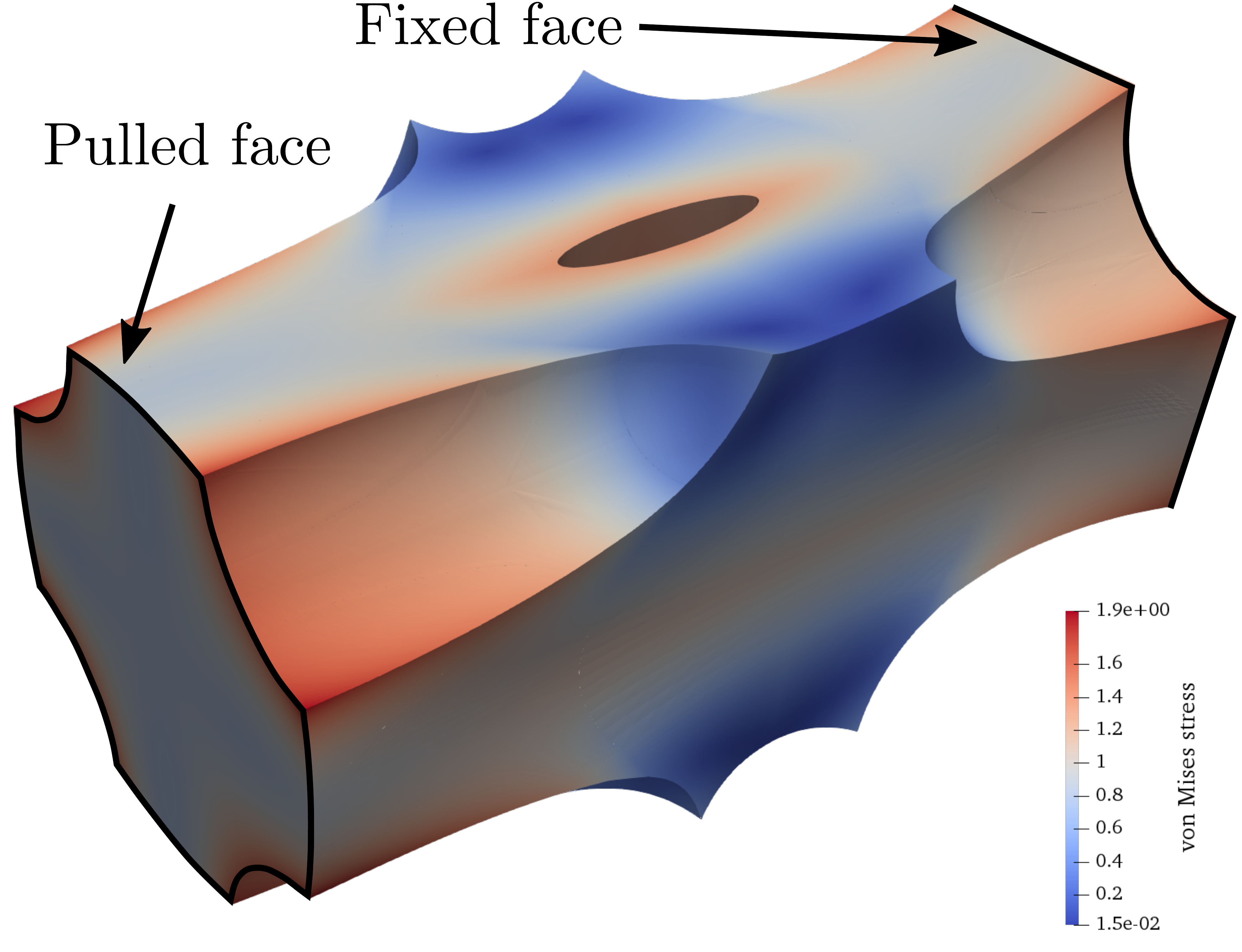}
	\end{tabular}
	\begin{picture}(0,0)
	    \put(-220, 15){$(a)$}
	    \put(-70, 15){$(b)$}
	    \put(70, 15){$(c)$}
	\end{picture}
 \caption{Linear elasticity analyses on trimmed geometries from Figures~
\ref{fig:trivar2_untrim},
\ref{fig:trimmed5_untrim} and \ref{fig:volume_untrim}.
 The von Mises stress distribution is plotted on the deformed geometries submitted to external loads: one face is fixed while
the opposite face is pulled perpendicularly.}
 \label{fig:analysis}
\end{figure}
The shown geometries are deformed with respect to the original ones as a consequence of the applied loads.
The plotted (colored) scalar field corresponds the distribution of the von Mises stress.

The cases \ref{fig:analysis}(a), \ref{fig:analysis}(b) and  \ref{fig:analysis}(c) were obtained using trimmed B-spline trivariates
with $(3\times3\times3)$, $(4\times4\times3)$ and $(4\times4\times4)$ \Bezier{} domains, respectively,
and discretizing the elastic displacement solutions with degree $2$ in every direction
(i.e.\ the models have $375$, $540$ and $648$ degrees of freedom, respectively).
Note that the models in Figures~\ref{fig:trivar2_untrim} and~\ref{fig:volume_untrim} have underlying \Bezier{} trivariates, and in order to obtain a \NewT{more precise} analysis solution, the trivariates of these models were refined in each parametric direction.
The computational times for the assembly of the stiffness matrices are $114$, $32$ and $61$ seconds, respectively;
and the resolution of the linear system of equations took less than one second in all cases.
The current untrimming process creates very accurate reparametrization of the trimmed domains.
Nevertheless, from the analysis point of view such level of accuracy is not needed for integration purposes.
Therefore, by creating much coarser reparametrizations the computational time of the matrix assembly
could potentially be reduced by orders of magnitude.

\section{Conclusion}	\label{sec:concludeSec}

In this work, an untrimming algorithm for trimmed trivariate is
introduced: decomposing a trimmed \Bspline{} trivariate into a set of
mutually exclusive tensor product \Bspline{} trivariates that
completely cover the trimmed domain. The algorithm uses a subdivision
algorithm, introduced in this paper, that precisely subdivides the
trimmed \Bspline{} trivariate into set of trimmed \Bezier{}
trivariates. The untrimming algorithm then generates tensor product
\Bspline{} trivariates that are singular only on the boundaries and
thus can be utilized for integration in IGA application (i.e at
quadrature locations). \NewT{The quality of the analysis solution is
not influenced by the Jacobians' quality of the generated tensor
product trivariates.}

Several directions for further improvements can be sought. Additional strategies of selecting the subdivision location, in case no kernel point is found (see Section~\ref{sec:No_Kernel_Subd}), can be further investigated, in order to minimize the number of subdivisions required and thus minimizing the number of generated trivariates in the output. For example, subdivision at locations that isolate individual non isoparametric trimming surfaces, if exist. That is, every trimmed trivariate will have at most one non iso-parametric trimming surface.

\NewT{Aiming for less subdivisions when seeking the kernel points
(see Section~\ref{sec:Kernel_Point}), one can explore better sampling
approaches than the uniform sampling approach taken here.  Similarly,
a more precise (less lossy) approach can be sought, by computing a
tight bounding box of the solutions of Equation~\eqref{eqn:visibility}
in Lemma~\ref{lemma:vis}, for all trimming surfaces,
simultaneously. However, it requires the (simultaneous) processing of
multiple bivariate inequalities (one for each trimming surface).
While potentially possible, for example, using interval arithmetic, it
can be time consuming.}

Due to the independent untrimming of adjacent boundary surfaces,
adjacent tensor product trivariates in the output might not share the
same functional space, which makes the presented approach less
suitable for other analysis approaches, such as domain
decomposition~\cite{domain_dec_widlund}, where constraints are imposed
on shared boundaries. In order to have adjacent surfaces with the same
functional space, the (trimmed) surfaces untrimming process (see
Section~\ref{sec:Untrimming_surfaces}) needs to be adapted to consider
not only a single trimmed surface, but also the adjacencies between
the trimming (trimmed) surfaces of the trivariate.

The back projection approximation process of trimming surfaces, in the parametric space, as described in Section~\ref{sec:ParamSpaceApprox}, 
doesn't guarantee stitched boundaries between adjacent surfaces, which likely to lead to black holes in the approximated surfaces that may introduce inaccuracies when using the untrimming results in applications, such as analysis. To overcome this limitation, the topological adjacency information should be provided to the back projection process, and proper stitching methods should be considered.

\section*{Acknowledgements}
This research was supported in part with funding from the ISRAEL SCIENCE FOUNDATION (grant No.\ 597/18) and in part the Defense Advanced Research Projects Agency (DARPA), under contract HR0011-17-2-0028. The views, opinions and/or findings expressed are those of the author and should not be interpreted as representing the official views or policies of the Department of Defense or the U.S.\ Government.
Pablo~Antolin gratefully acknowledges the support of the European Research Council, through the ERC AdG n.\ 694515 - CHANGE.

\bibliography{VUntrimBib}

\end{document}